\let\MYcaption\@makecaption
\let\@makecaption\MYcaption
\def\BibTeX{{\rm B\kern-.05em{\sc i\kern-.025em b}\kern-.08em
    T\kern-.1667em\lower.7ex\hbox{E}\kern-.125emX}}
\definecolor{green}{rgb}{0,0.4,0.05}
\definecolor{red}{rgb}{0.8,0,0}
\DeclarePairedDelimiter{\ceil}{\lceil}{\rceil}
\DeclarePairedDelimiter{\floor}{\lfloor}{\rfloor}
\newsavebox{\measurebox}
\pgfplotsset{compat=newest}
\tikzset{every picture/.style={line width=0.2mm}}
\newcommand{\cvec}[1]{{\mathbf #1}}
\newcommand{\bm}[1]{\mathbf #1}
\newcommand{\eye}{\mathbf{I}}
\newcommand{\bs}{\boldsymbol}
\newcommand{\op}{\operatorname}
\theoremstyle{definition}
\newtheorem{theorem}{Theorem}
\newtheorem{lemma}{Lemma}
\newlength\fheight 
\newlength\fwidth 
\title{Sum Rate Maximization in the Constant Envelope  MIMO Downlink with the RZF Precoder}
		\author{Ferhad Askerbeyli, \textit{Graduate Student Member, IEEE}, Wen Xu, \textit{Senior Member, IEEE}, Josef A. Nossek, \textit{Life Fellow, IEEE} \thanks{ This paper was presented in part at  \emph{2023 IEEE 98th Vehicular
  Technology Conference (VTC2023-Fall)}.}%
	\thanks{F.~Askerbeyli is with  School of Computation, Information and Technology, Technical University of Munich (TUM), 80333 Munich, Germany and also with  Munich Research Center, Huawei Technologies Duesseldorf GmbH, 80992 Munich, Germany (e-mail: ferhad.askerbeyli@tum.de).}%
    \thanks{J.~A.~Nossek is with School of Computation, Information and Technology, Technical University of Munich (TUM), 80333 Munich, Germany (e-mail: josef.a.nossek@tum.de).}%
	\thanks{W. Xu is with Munich Research Center, Huawei Technologies Duesseldorf GmbH, 80992 Munich, Germany (wen.xu@ieee.org).}}
\begin{document}
	    
	\maketitle 
\begin{acronym}
\acro{MIMO}{multiple input multiple output}
\acro{CE}{constant envelope}
\acro{i.i.d.}{independent and identically distributed}
\acro{CSIT}{channel state information at the transmitter}
\acro{PA}{power amplifier}
\acro{ZF} {zero-forcing}
\acro{MRT}{maximum ratio transmission}
\acro{RZF}{regularized zero-forcing}
\acro{SQINR}{signal-to-quantization, interference and noise ratio}
\acro{SINR}{signal-to-interference and noise ratio}
\acro{SNR}{signal-to-noise ratio}
\acro{DAC}{digital-to-analog converter}
\acro{ADC}{analog-to-digital converter}
\acro{MUI}{multi-user interference}
\acro{CLT}{central limit theorem}
\acro{w.r.t.}{with respect to}
\acro{AQNM}{additive quantization noise model}
\acro{LCA}{linear covariance approximation}
\acro{MSE}{mean square error}
\acro{FLOP}{floating point operation}
\acro{RF}{radio frequency}
\acro{SISO}{single-input-single-output}
\acro{MMP}{mixed monotonic programming}
\acro{MM}{mixed monotonic}
\end{acronym}
\begin{abstract}
Feeding \acp{PA} with \ac{CE} signals is an effective way to reduce the power consumption in massive \ac{MIMO} systems. The nonlinear distortion caused by \ac{CE} signaling must be mitigated by means of signal processing to improve the achievable sum rates. To this purpose, many linear and nonlinear precoding techniques have been developed for the \ac{CE} \ac{MIMO} downlink. The vast majority of these \ac{CE} precoding techniques do not include a power allocation scheme, which is indispensable to achieve adequate performances in the downlink with channel gain imbalances between users. In this paper, we 
present two algorithms to produce a power allocation scheme for \ac{RZF} precoding in \ac{CE} \ac{MIMO} downlink. Both techniques are based on transforming the \ac{CE} quantized  \ac{MIMO} downlink to an approximately equivalent system of parallel \ac{SISO} channels. The first technique is proven to solve the sum rate maximization problem in the approximate system optimally, whereas the second technique obtains the local maximum with lower complexity.  We also extend another state-of-te-art  quantization aware sum rate maximization algorithm with linear precoding to the \ac{CE} downlink. Numerical results illustrate significant gains for the performance of the \ac{RZF} precoder when the \ac{CE} quantization is taken into account in a power allocation. Another key numerical result is that the proposed \ac{RZF} techniques achieve almost the identical performance so that the one with lower computational complexity is chosen as the main method. Results also show that the proposed \ac{RZF} precoding schemes perform at least as good as the state-of-the-art method with an advantage that the main \ac{RZF} method has significantly lower computational complexity than the state-of-the-art.   
\end{abstract}

\begin{IEEEkeywords}
\ac{MIMO} downlink, \ac{RZF} precoder, \ac{CE} transmit signals, power allocation, asymptotic analysis, sum rate maximization
\end{IEEEkeywords}
\section{Introduction}
For 5G and beyond  mobile communications, massive \acf{MIMO} is a key technology  to achieve target performances in spectral efficiency, reliability  and coverage \cite{RuPeLaMa13,LaEdTuMa14,LuLiYe14}. Implementation of a fully digital massive \ac{MIMO} requires a \ac{RF} chain for each antenna element separately.  
As a result, fully digital  massive \ac{MIMO} system suffers from a low energy efficiency due to high number of active \ac{RF} chain components such as \acp{DAC}, \acp{ADC}, \acfp{PA},  mixers etc.

One approach to recover the energy efficiency is to reduce  power consumption 
of the \ac{RF} chain components, especially the components that contribute to the power consumption most. 
In the downlink, which is the scenario of our interest in this study,  the primary way to recover energy efficiency is to maximize the \ac{PA} efficiency, since \ac{PA} is the most power-hungry component in the \ac{RF} chain on the transmitter side \cite{BlZeBa10}. For linear \acp{PA}, the drain efficiency is maximized  by ensuring that \ac{PA} input signals  have a fixed magnitude in all channel uses, i.e., by feeding \acp{PA} with \acf{CE}  signals \cite{Ra86}. Furthermore, \ac{CE} signals enable use of nonlinear \acp{PA}, which are designed to operate at very high power efficiency. The secondary way is to reduce the  power consumption of \acp{DAC} by decreasing their resolution. Employing 1-bit \acp{DAC} serves both primary and secondary ways, as it minimizes the power consumption at the  digital-to-analog conversion while generating \ac{CE} input signals.

CE signaling and use of low resolution \acp{DAC}  come at a cost of severe quantization distortion that significantly deteriorates the transmit signal. Thus, many linear and nonlinear precoding techniques have been developed to suppress the quantization distortion and  achieve target performances in the quantized systems with high energy efficiency.  Symbol-wise nonlinear precoding methods especially have been successful in achieving solid data and error rates by suppressing the quantization distortion for every transmission individually \cite{JaDuCo17,NeStKr22,JeMeNo18}.

The above-mentioned quantized precoding techniques are designed for the downlink where users have a common large-scale fading coefficient and they fail in channels with varying large-scale fading coefficients for different users. In this case, a power allocation mechanism prioritizing between users is indispensable to achieve the satisfactory sum or error rate performances. An arising challenge in quantized systems is to develop a precoding technique that simultaneously handles the power allocation and the quantization distortion. 

\subsection{Related Works}
Power allocation for the \ac{MIMO} downlink with high resolution \acp{DAC}  has been studied extensively. For instance, power allocation for the weighted sum rate maximization in  high resolution downlink with \ac{ZF} precoding is a convex problem and it is solved by the well-known waterfilling algorithm \cite{YooGo06}. A broader class of linear precoding is defined by \acf{RZF}, where the channel inversion operation is controlled via a regularization parameter. Weighted sum rate maximization with the \ac{RZF} precoding is a problem of finding the optimal regularization parameter and the power allocation jointly and unlike  with the \ac{ZF} precoding,  the problem is not convex. Furthermore, analysis of the exact system is very difficult compared to the \ac{ZF} case.  In \cite{WaCo12},  a broad analysis of the MIMO downlink with \ac{RZF} precoding is provided by utilizing the  large system approximation. A power allocation scheme for \ac{MIMO} downlink with \ac{RZF} is also included in \cite{WaCo12}. However, the power allocation  in \cite{WaCo12} is not for users with different large-scale fading coefficients, but for users with different quality of \ac{CSIT}. Authors of \cite{MuZaEv13} derived joint optimality conditions for  power allocation, user loading and regularization to maximize the sum rate in MIMO downlink with the  \ac{RZF} precoding. Unlike \cite{WaCo12}, power allocation in \cite{MuZaEv13} is to tackle the channel gain imbalances between users. In \cite{SaBj14}, power allocation that minimizes the transmit power while satisfying individual \ac{SINR} constraints is obtained and examined via the large system approximation.

Studies on quantized systems with \ac{ZF} or \ac{RZF} precoding have also been reported. Authors of \cite{SaFiSwLe17} combined the large system approximation with the Bussgang decomposition to provide a performance analysis of \ac{MIMO} downlink with \ac{ZF} precoding and 1-bit \acp{DAC}. A similar analysis  for the \ac{MIMO} downlink  with \ac{ZF} precoder  and \ac{CE} quantization is done in \cite{SaMeHe20}, where the error rate performance is improved by introducing a Gaussian dither to the precoded signal. 
Analysis of MIMO downlink with 1-bit \acp{DAC} and \ac{RZF} precoding is done in \cite{XuXuGo19} to optimize the regularization parameter and the user loading.

None of  \cite{WaCo12,MuZaEv13,SaBj14,SaFiSwLe17,SaMeHe20,XuXuGo19} considered power allocation and quantization together. The first study with power allocation mechanism and quantization awareness is \cite{ChPaLe22}, where energy efficiency  maximization problem of the \ac{MIMO} downlink with low-resolution \acp{DAC} is solved by optimizing all elements of a linear precoding matrix jointly.  A special case of energy efficiency optimization problem in \cite{ChPaLe22} is the sum rate maximization,  which implicitly includes a power allocation problem as well. In contrast to \cite{ChPaLe22}, a quantization aware power allocation scheme, which aims at sum rate maximization in the 1-bit \ac{MIMO} downlink with \ac{ZF} precoding, is explicitly formulated with a power factor per each user in \cite{AsXuNo23}.

\subsection{Main Contributions}
Employing \ac{RZF} precoders with power allocation  in \cite{MuZaEv13} and \cite{SaBj14}  for \ac{CE} quantized systems is clearly suboptimal, as they disregard the quantization distortion. Previously in \cite{AsXuNo23_2}, we handled this suboptimality by proposing 1-bit quantization aware power allocation for  \ac{RZF} precoding. In this paper,
we extend the heuristic method in \cite{AsXuNo23_2} to the \ac{CE} downlink with higher resolution and provide a novel method that solves the same problem as the heuristic method optimally. Contributions of this work are summarized as follows: 
\begin{itemize}
\item An asymptotic analysis of a \ac{MIMO} downlink with \ac{RZF} precoding and \ac{CE} transmit signals is provided by combining the large system approximation with high transmit power assumption. The asymptotic analysis leads to derivation of an approximately equivalent system with parallel \acf{SISO} channels. 
\item Two algorithms that obtain a power allocation and a regularization parameter  to maximize the sum rate in the approximate system are proposed. One of the algorithms with branch and bound method obtains the global maximum for the approximate system. The other  algorithm with alternating optimization converges to a local maximum. 
\item Behavior of the alternating algorithm at high transmit power regime  and its computational complexity are analyzed in detail.
\item The state-of-the-art quantization aware linear precoding method in \cite{ChPaLe22} is nontrivially extended for the \ac{CE} quantization case. Later it is used for comparison with the proposed \ac{RZF} precoding techniques.  
\end{itemize}
The asymptotic analysis and the alternating algorithm are presented in \cite{AsXuNo23_2}, the other contributions (including the analysis of the alternating algorithm) are introduced for the first time.

Two most relevant works to ours are \cite{MuZaEv13} and \cite{XuXuGo19}, where the former excludes the \ac{CE} quantization and the latter excludes  the power allocation. Both \cite{MuZaEv13} and \cite{XuXuGo19} mainly focus on user-loading problem and the latter considers only 1-bit quantization. This work considers a system in the intersection of the settings in \cite{MuZaEv13} and \cite{XuXuGo19}, i.e., it is  a nontrivial generalization of both regarding the power allocation problem. 

On the contrary to the quantization aware generalized power iterations for spectral efficiency maximization (Q-GPI-SEM) algorithm from  \cite{ChPaLe22}, we impose the \ac{RZF} structure on the linear precoding matrix. By doing so, we reduce the joint optimization of the whole precoding matrix to the joint optimization of power factors of users, regularization parameter and number of users to serve, which reduces the computational complexity. Furthermore, the methods we propose are compatible with any  input constellation, whereas Q-GPI-SEM's performance deteriorates if the input signal is not Gaussian.  

\subsection{Remainder and Notation}
The structure of this paper is as follows:  System model  is introduced in  \Cref{sec:sys_mod}. The approximate \acp{SQINR} of users are computed in \Cref{sec:asymptotic_analysis} by utilizing the asymptotic analysis.  An approximately equivalent system of parallel \ac{SISO} channels based on the approximate \acp{SQINR} is presented in \Cref{sec:approx_sys}.  Two algorithms that perform sum rate maximization for this approximate system are devised in \Cref{{sec:opt_alg}}. A high transmit power  and complexity  analysis of the alternating algorithm  is presented in \Cref{sec:alter_alg}. The Q-GPI-SEM algorithm is extended to \ac{CE} quantized systems in  \Cref{sec:ex_QA_GPI_SEM}. Numerical  results and conclusions are reported in \Cref{sec:num_res} and \Cref{sec:concl}, respectively.

\textit{Notation}: The $m$th entry of vector $\cvec{a}$ and the $(m,n)$th entry of  matrix $\bm{A}$ are  denoted as $a_{m}$ or $[\cvec{a}]_m$ and $a_{m,n}$ or $[\bm{A}]_{m,n}$, respectively.  Expressions $\text{diag}(\cvec{A})$, $\text{diag}(\cvec{a})$ and $\text{diag}(a_1,a_2,\ldots a_n)$  stand for  diagonal matrices  with the entries of the diagonal of matrix $\cvec{A}$,  vector $\cvec{a}$ and vector $[a_1,a_2, \ldots a_n]$, respectively.  The covariance matrix between vectors $\cvec{a}$ and $\cvec{b}$ is denoted as $\bm{C}_{\cvec{a}\cvec{b}}$. A circularly symmetric complex Gaussian distribution with the  mean $\cvec{m}$ and covariance $\bm{C}$ is denoted as $\mathcal{CN}(\cvec{m},\bm{C})$. The term $\text{vec}(\bm{A})$ denotes the vector obtained by stacking columns of $\bm{A}$. The closed interval between $a$ and $b$ is denoted as $[a,b]$.
\section{System Model and Problem Formulation} \label{sec:sys_mod}
	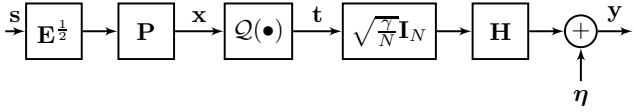
\begin{figure}[h]   
	\centering
	\resizebox{\columnwidth}{!}{\tikzstyle{int}=[draw, minimum width=0.8cm, minimum height=0.8cm, thick]
\tikzstyle{init} = [pin edge={<-,thick,black}]
\tikzstyle{sum} = [draw, circle,inner sep=1pt, minimum size=2mm,  thick] 
\begin{tikzpicture}[node distance=1cm,auto,>=latex']
    \node [int] (a) {$\bm{E}^{\frac{1}{2}}$};
    \node (b) [left of=a,node distance=0.7cm, coordinate] {a};
    \node [int] (c1) [right=0.5cm of a] {$\bm{P}$};
    \node [int] (c2) [right=0.7cm of c1] {$\mathcal{Q}(\bullet)$};
    
    \node [int] (c) [right=0.7cm of c2] {$\sqrt{\frac{\gamma}{N}}\mathbf{I}_N$};
    \node [int] (f) [right=0.5cm of c] {$\bm{H}$};
    \node [sum,  pin={[init]below:$\boldsymbol{\eta}$}] (g) [right=0.5cm of f] {$\mathbf{+}$};
    \node [coordinate] (end) [right=0.5cm of g, node distance=0.6cm]{};

    \path[->,thick] (b) edge node[above] {$\cvec{s}$} 
                       node[below] {} (a);
    \path[->,thick] (c) edge (f);
    \path[->,thick] (f) edge (g);
    \path[->,thick] (a) edge (c1);
    \path[->,thick] (c1) edge node[above] {$\cvec{x}$} node[below]{}  (c2);
    \path[->,thick] (c2) edge node[above]{$\cvec{t}$} node[below] {}  (c); 
    \path[->,thick] (g) edge node[above] {$\mathbf{y}$} 
                       node[below] {} (end);                        
\end{tikzpicture}}
	\caption{MIMO downlink with linear precoding and \ac{CE} quantization.}
	\label{fig:downlink_quan}  
	\end{figure}

Fig.~\ref{fig:downlink_quan} illustrates  the \ac{MIMO} downlink of our interest, which consists of a transmitter with $N$ antennas and $M$ single-antenna users  with $M\leq N$.  The channel between the transmitter and single-antenna users is modeled as  $\bm{H}=\bm{\Sigma}^{\frac{1}{2}}\tilde{\bm{H}}\in\mathbb{C}^{M \times N}$, where $\bm{\Sigma}=\text{diag}(\sigma_1,\ldots, \sigma_M)$ contains the large-scale fading coefficients of all users in nonincreasing order such that  $\sigma_1\geq \sigma_2 \geq \ldots \geq \sigma_M$    and  matrix $\tilde{\bm{H}}\in\mathbb{C}^{M \times N}$ consists of uncorrelated elements with $[\tilde{\bm{H}}]_{m,n}\sim\mathcal{CN}(0,1)\,\forall m,n$. We denote the $m$th row of $\bm{H}$ and $\tilde{\bm{H}}$ as $\cvec{h}_m^{\op{H}}$ and $\tilde{\cvec{h}}_m^{\op{H}}$, respectively and $\cvec{h}_m^{\op{H}}=\sqrt{\sigma_m}\tilde{\cvec{h}}_m^{\op{H}}$.

The input signal $\cvec{s}\in\mathbb{C}^M$ consists of zero mean,  \ac{i.i.d.} symbols with unit variance, i.e., $\operatorname{E}[\cvec{s}]=\cvec{0}$ and $\op{E}[\cvec{s}\cvec{s}^{\op{H}}]=\bm{0}$, that are to be transmitted to the corresponding user. There is no restriction other  than all input symbols have a common constellation and distribution. 

A two-stage linear precoding scheme is employed. In the first stage, the input signal $\cvec{s}$ is weighted by $\bm{E}^{\frac{1}{2}}$, where $\bm{E}=\text{diag}(
e_1,e_2,\ldots e_M)$ is the power allocation matrix and $e_m\geq 0$ is the power factor of the $m$th user. In the second stage,  mapping of $\cvec{s}$ to the precoded signal $\cvec{x}$ is carried out as $\cvec{x}=\bm{P}\bm{E}^{\frac{1}{2}}\cvec{s}$, where $\bm{P}$ is the precoding matrix determined according to the \ac{CSIT}. Power factors in  $\bm{E}$   set the received signal powers at users and should be tuned carefully to achieve high sum rates especially in channels with various large-scale fading coefficients for different users. As some users may have very weak channels, it may be optimal to allocate power only to a subset of users with the strongest channels such that $e_1,e_2,\ldots e_K>0$ and $e_{K+1}=e_{K+2}\ldots=e_M=0$.  In such cases, symbols of users with indices from $K+1$ to $M$ do not contribute to the precoded signal, i.e., they do not get served. 

For the second stage, we consider the \ac{RZF} precoding which reads as follows when all $M$ users are taken into account
\begin{equation}\label{eq:rzf_full}
\bm{P}=(\tilde{\bm{H}}^{\op{H}}\tilde{\bm{H}}+\alpha\eye_N)^{-1}
\tilde{\bm{H}}^{\op{H}}\bm{\Sigma}^{-\frac{1}{2}}.
\end{equation}
Equation \eqref{eq:rzf_full} is  a controlled inversion of the small-scale fading matrix $\tilde{\bm{H}}$ with a nonnegative regularization parameter $\alpha$ and inversion of the large-scale fading matrix $\bm{\Sigma}^{\frac{1}{2}}$. As power allocation may cause only a subset of users to get served, the \ac{RZF} precoding  does not always need to take all users into account. For that reason, we define the \ac{RZF} precoder 
depending on the maximum number of users we aim to serve. The \ac{RZF} precoding  we employ for $K$ users reads as    
\begin{equation}\label{eq:rzf}
\bm{P}=\begin{bmatrix}
\bm{P}_K &\bm{0}
\end{bmatrix}=\begin{bmatrix}
(\tilde{\bm{H}}_{K}^{\op{H}}\tilde{\bm{H}}_{K}+\alpha\eye_N)^{-1}
\tilde{\bm{H}}_{K}^{\op{H}}\bm{\Sigma}_K^{-\frac{1}{2}}& \bm{0}
\end{bmatrix},
\end{equation} 
where $\tilde{\bm{H}}_K\in\mathbb{C}^{K\times N}$ consists of first $K$ rows of $\tilde{\bm{H}}$ and $\bm{\Sigma}_K=\text{diag}(\sigma_1,\sigma_2, \ldots \sigma_K)$. The \ac{RZF} in \eqref{eq:rzf} leads to the following precoding
\begin{equation}
\label{eq:precoded_signal}
\cvec{x}=\bm{P}_K\bm{E}^{\frac{1}{2}}_K\cvec{s}_{K}=\sum_{k=1}^{K}\sqrt{e_k}\cvec{p}_k s_k,
\end{equation}
where $\bm{E}_K=\text{diag}(e_1,e_2,\ldots e_K)$ and $\cvec{s}_K=[s_1  s_2  \ldots s_K]^{\op{T}}$ and $\cvec{p}_k$ is the $k$th column of $\bm{P}_K$. As a result, design of the two-stage \ac{RZF} precoder consists of jointly determining number of users to serve, power allocation for the served users   and the regularization parameter. Note that  \ac{RZF} precoding in \eqref{eq:rzf}  includes \eqref{eq:rzf_full} as a special case of $K=M$. 

Elements of the precoded signal $\cvec{x}\in \mathbb{C}^{N}$ go through the following \ac{CE} quantization with $Q$ levels
\begin{equation}\label{eq:CE_quantization}
t_n= \mathcal{Q}(x_n)=\exp(\text{j}(\ceil{\frac{\angle{x_n}}{2\psi}}-\psi)), \forall n, 
\end{equation}     
where $\psi=\frac{\pi}{Q}$, i.e., every precoded symbol is mapped to one of the $Q$ discrete points with unit magnitude. We also use the vector notation for this element-wise quantization as $\cvec{t}=\mathcal{Q}(\cvec{x})$. Note that the power of quantized signal $\cvec{t}$ is equal to $N$ and \ac{CE} quantization is invariant to scaling of $\cvec{x}$ with a positive constant $v$ such that  $\mathcal{Q}(\cvec{x})=\mathcal{Q}(v\cvec{x})$.  
At \acp{PA},  the quantized signal $\cvec{t}$ is scaled by factor of $\sqrt{\frac{\gamma}{N}}$ to  produce a transmit signal with power of $\gamma$. The transmit signal $\cvec{t}$ propagates through channel $\bm{H}$ and is received at single-antenna users with additive white Gaussian noise (AWGN) of $\bs{\eta} \sim \mathcal{CN}(\cvec{0}_M,\eye_M)$ as follows
\begin{equation}
\label{eq:io}
\cvec{y}=\sqrt{\frac{\gamma}{N}}\bm{H}\cvec{t}+\bs{\eta}.
\end{equation}
 
In this paper, our primary goal is to obtain a two-stage \ac{RZF} precoding scheme that maximizes the sum rate of the \ac{CE} quantized downlink described in Fig.~\ref{fig:downlink_quan}.  Since two-stage \ac{RZF} precoding can be parametrized by $K$, $\bm{E}_K$ and $\alpha$,  the sum rate maximization problem can be formulated as follows
\begin{equation}\label{eq:original_prob_with_K}
\underset{\substack{\alpha \geq 0,\, K\in\{1,2,\ldots M\} \\e_1,e_2,\ldots e_K\geq0}}{\max} \sum_{k=1}^{K} I(y_k;s_k), 
\end{equation} 
where $I(y_k;s_k)$ is the mutual information between the $k$th input and and received signal.

Solving \eqref{eq:original_prob_with_K} directly is challenging, since derivation of  an analytic expression  for the rate of the $k$th user $I(y_k;s_k)$ is  difficult as the distribution of $\cvec{s}$ is arbitrary and \ac{CE} quantization takes place. We instead approximately compute \ac{SQINR} values of the system  and then identify an equivalent system of parallel SISO channels with the help of computed \acp{SQINR}. In the end, the sum rate maximizing algorithm is developed based on the approximately equivalent system of parallel SISO channels. 





\section{Computation of \ac{SQINR}}\label{sec:asymptotic_analysis}

In this section, we compute \acp{SQINR} of the system described in \Cref{sec:sys_mod},  when \ac{RZF} precoding  is done for $K$ users. To this aim, we need to decompose the received signal at the $k$th user into uncorrelated components and compute the powers of these components. 
\subsection{Received Signal Decomposition}	

A linear relationship between $s_k$ and  $y_k$ is not immediately available because of the   the \ac{CE} quantization. Bussgang decomposition is conventionally applied to formulate the \ac{CE}  quantization  in \eqref{eq:CE_quantization}  as a linear stochastic process to circumvent the nonlinearity as follows  \cite{JaDuCo17,SaFiSwLe17}
\begin{equation}\label{eq:bussgang_decom}
\cvec{t}=\bm{B}\cvec{x}+\cvec{d}=\bm{B}\bm{P}_K\bm{E}_K^{\frac{1}{2}}\cvec{s}_K+\cvec{d},
\end{equation}
where $\bm{B}$ is the Bussgang gain matrix and $\cvec{d}$ is the distortion vector \cite{Bussgang52}. Bussgang gain matrix $\bm{B}$ for CE quantization of $\cvec{x}$ with $Q$ levels as $\cvec{t}=\mathcal{Q}(\cvec{x})$ is computed  in \cite{HeNo18} as 
\begin{equation}\label{eq:bussgang_mat_ce}
\bm{B}=\xi_Q\text{diag}(\bm{C}_{\cvec{x}\cvec{x}})^{-\frac{1}{2}}, 
\end{equation}
with 
\begin{equation}\label{eq:xi_Q}
\xi_Q=\frac{Q}{2\sqrt{\pi}}\sin(\frac{\pi}{Q}).
\end{equation} 
Note that $\xi_Q$ solely depends on the number of quantization levels $Q$ and it increases as $Q $ increases and   $\lim_{Q\to\infty}\xi_Q=\sqrt{\frac{\pi}{4}}$. For our case, the key property of Bussgang decomposition is that if $\cvec{x}$ is Gaussian distributed, then $\cvec{t}$ is decomposed into two uncorrelated components  $\cvec{d}$ and  $\bm{B}\cvec{x}$. The decomposition in \eqref{eq:bussgang_decom} can still be used if $\cvec{x}$ is not Gaussian, however, in this case $\cvec{d}$ would not be uncorrelated with $\cvec{x}$. Note that $\cvec{d}$ is certainly not Gaussian so that assuming Gaussian $\cvec{d}$ at the receivers would lead to mismatched decoding. 

 By combining  \eqref{eq:bussgang_mat_ce}, \eqref{eq:bussgang_decom}, \eqref{eq:io} and  \eqref{eq:precoded_signal}, one can  write the received signal for the $k$th user as 	
\begin{equation}\label{eq:io_decom}
\begin{aligned}
&y_k=\sqrt{\frac{\gamma}{N}e_k}\xi_Q \cvec{h}_k^{\op{H}}\text{diag}(\bm{C}_{\cvec{x}\cvec{x}})^{-\frac{1}{2}}\cvec{p}_k s_k+\sqrt{\frac{\gamma}{N}}\cvec{h}_{k}^{\op{H}}\cvec{d}\\&+\sqrt{\frac{\gamma}{N}}\xi_Q \cvec{h}_k^{\op{H}}\text{diag}(\bm{C}_{\cvec{x}\cvec{x}})^{-\frac{1}{2}}\sum_{\substack{j\neq k}}\sqrt{e_j}\cvec{p}_j s_j+\eta_k,
\end{aligned} 
\end{equation} 
for  $k =1,2,\ldots K$. In \eqref{eq:io_decom}, the received signal is decomposed  as the component that is linearly dependent on input $s_k$, the received quantization distortion,  the \ac{MUI} and the additive noise, respectively. Yet, computing the exact \ac{SQINR} of the $k$th user is still very difficult by \eqref{eq:io_decom} since the quantization distortion vector $\cvec{d}$ is  not uncorrelated from the input signal $\cvec{s}$, unless $\cvec{x}$ is Gaussian.  Furthermore, computing the exact power of the linearly dependent component, \ac{MUI} and quantization distortion  in terms of regularization parameter $\alpha$ is very difficult. For those reasons, we resort to asymptotic approximation.  

\subsection{Asymptotic Approximation}\label{sec:asympto}
The difficulty we encounter in computing the exact \ac{SQINR} values of the system described in Fig.~\ref{fig:downlink_quan} is present in \cite{MuZaEv13} and  \cite{XuXuGo19}, where  researchers resorted to asymptotic approximation, as we also follow here.

Asymptotic approximation is typically used interchangeably with the so-called "large system approximation", which  consists of approximating  metrics of a \ac{MIMO} system under the assumption that $M\to\infty, N\to\infty$ with a fixed user load $\beta= \frac{M}{N}$. Large system approximation is based on one of the key merits of massive \ac{MIMO} known as channel hardening. As  dimensions of a \ac{MIMO} system get large, the system effectively acts more deterministic, i.e.,  \ac{SINR}/\ac{SQINR} at the receivers do not depend on  the channel realization $\bm{H}$, but they depend on the channel dimensions $M$, $N$.  For any system with a finite $M, N$ and a given $\beta$, an approximate system based on the large system approximation can be obtained and used to develop algorithms for the original system. 

In our  scenario, we assume a more realistic channel model including large-scale fading coefficients and thus  power factor matrix $\bm{E}_K$ is also involved in precoding. For that reason,  the typical large system approximation is not sufficient  to characterize the overall system in \eqref{eq:io_decom}. To overcome this problem, we additionally assume that number of users $K$ for which \ac{RZF} precoding is designed is large and 
\begin{equation}\label{eq:high_transmit_power}
\bm{E}_K\bm{\Sigma}_K^{-1}\approx \frac{\op{tr}(\bm{E}_K\bm{\Sigma}_K^{-1})}{K} \eye_K. 
\end{equation} In \Cref{sec:high_tr}, we show  that these assumptions get more accurate at the high transmit power regime for the algorithms we present. Thus, we refer to these two additional assumptions together  as the high transmit power  assumption. Note that \eqref{eq:high_transmit_power} is not imposed  as a constraint when optimizing $\bm{E}_K$, but it serves to only identify an approximate system which get more accurate at high transmit power regime.
\subsubsection{Large System Approximation}\label{subsec:random}
Let us first recap few identities from random matrix theory that we repeatedly use for large system approximation.
\begin{lemma}
(Corolllary 1 in \cite{EvTs00}) Let $\bm{A}$ be a deterministic $N\times N$ complex matrix with  bounded spectral radius for all $N$. Let $\tilde{\cvec{h}}_k \in \mathbb{C}^{N\times 1}$ consist of  i.i.d complex random variables with zero mean, unit variance and finite eight moment. Then,
\begin{equation}\label{eq:corollary1}
\frac{1}{N}\tilde{\cvec{h}}_k^{\op{H}}\bm{A}\tilde{\cvec{h}}_k\approx\frac{1}{N}\op{tr}(\bm{A}).
\end{equation}
\end{lemma}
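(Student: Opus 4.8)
The statement is to be read in the asymptotic sense: under the large-system limit $N\to\infty$, the difference $\frac{1}{N}\tilde{\cvec{h}}_k^{\op{H}}\bm{A}\tilde{\cvec{h}}_k-\frac{1}{N}\op{tr}(\bm{A})$ tends to zero almost surely. The plan is the classical "mean, then higher-moment concentration, then Borel--Cantelli" argument for quadratic forms in i.i.d.\ entries.

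First I would pin down the mean. Writing $\tilde{\cvec{h}}_k=[h_1,\ldots,h_N]^{\op{T}}$ and expanding $\tilde{\cvec{h}}_k^{\op{H}}\bm{A}\tilde{\cvec{h}}_k=\sum_{i,j}\bar h_i[\bm{A}]_{i,j}h_j$, independence together with $\op{E}[h_i]=0$ and $\op{E}[|h_i|^2]=1$ gives $\op{E}[\bar h_i h_j]=\delta_{i,j}$, hence $\op{E}[\tilde{\cvec{h}}_k^{\op{H}}\bm{A}\tilde{\cvec{h}}_k]=\sum_i[\bm{A}]_{i,i}=\op{tr}(\bm{A})$. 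So it suffices to show that the centered variable $Y_N:=\tilde{\cvec{h}}_k^{\op{H}}\bm{A}\tilde{\cvec{h}}_k-\op{tr}(\bm{A})$ obeys $Y_N/N\to 0$ almost surely.

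Second --- and this is the technical core --- I would bound a fourth moment of $Y_N$. Splitting $Y_N=\sum_{i\neq j}\bar h_i[\bm{A}]_{i,j}h_j+\sum_i[\bm{A}]_{i,i}(|h_i|^2-1)$ and expanding $\op{E}[|Y_N|^4]$, only index tuples in which each index appears at least twice survive; collecting the surviving sums yields the quadratic-form moment inequality $\op{E}[|Y_N|^4]\le C\big(\op{E}[|h_1|^4]^2\,(\op{tr}(\bm{A}\bm{A}^{\op{H}}))^2+\op{E}[|h_1|^8]\,\op{tr}((\bm{A}\bm{A}^{\op{H}})^2)\big)$ for an absolute constant $C$ (this is the Bai--Silverstein/Evans--Tse inequality, and the finite eighth-moment hypothesis is exactly what keeps the second term finite). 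Since the norm of $\bm{A}$ is bounded uniformly in $N$, both $\op{tr}(\bm{A}\bm{A}^{\op{H}})$ and $\op{tr}((\bm{A}\bm{A}^{\op{H}})^2)$ are $O(N)$, so $\op{E}[|Y_N|^4]=O(N^2)$ and therefore $\op{E}[|Y_N/N|^4]=O(N^{-2})$. Then Markov's inequality gives $\Pr(|Y_N/N|>\varepsilon)\le\varepsilon^{-4}\op{E}[|Y_N/N|^4]=O(N^{-2})$ for every $\varepsilon>0$, which is summable in $N$, and the Borel--Cantelli lemma delivers $Y_N/N\to 0$ almost surely.

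The main obstacle is the fourth-moment bound: the combinatorial bookkeeping of which terms in $\op{E}[|Y_N|^4]$ do not vanish and the verification that the surviving sums are controlled by traces of powers of $\bm{A}\bm{A}^{\op{H}}$ with the correct powers of $N$. In a self-contained treatment one would carry this expansion out explicitly (or invoke the moment lemma directly); here, since the result is quoted verbatim as Corollary~1 of \cite{EvTs00}, citing it is the natural route, with the finite eighth-moment assumption serving precisely to license the $p=4$ instance of that lemma and hence the almost-sure (rather than merely in-probability) conclusion.
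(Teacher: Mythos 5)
Your proposal is correct: the paper itself gives no proof of this lemma, simply quoting it as Corollary~1 of the cited reference \cite{EvTs00}, so there is nothing in the paper to diverge from. Your mean computation, the fourth-moment (Bai--Silverstein/Evans--Tse) quadratic-form inequality with the finite eighth moment licensing the $p=4$ case, the $O(N^{-2})$ Markov bound, and Borel--Cantelli constitute exactly the standard proof of that cited result, and citing it directly, as you note, is the route the authors take.
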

\begin{theorem}\label{th:theorem1}
(Theorem~7 in \cite{EvTs00})  For $\tilde{\bm{H}}_K\in\mathbb{C}^{K\times N}$ consisting of i.i.d elements $[\tilde{\bm{H}}_K]_{m,n}\sim\mathcal{CN}(0,1)$, when  $K\to \infty$, $N\to\infty$ with fixed ratio $\beta=\frac{K}{N}$, it holds that
\begin{equation}\label{eq:8}
\op{tr}({({\tilde{\bm{H}}_K}^{\op{H}}\tilde{\bm{H}}_K+\alpha \eye_{N})}^{-1})\approx g(\beta,\rho),
\end{equation}
where $g(\beta,\rho)$ satisfies 
\begin{equation}\label{eq:9}
g(\beta,\rho)={\bigg(\rho+\frac{\beta}{1+g(\beta,\rho)}\bigg)}^{-1}
\end{equation}
	with the normalized regularization parameter $\rho=\frac{\alpha}{N}$. Notice  that $g(\beta,\rho)$ is positive by definition, since it is an asymptotic value of  trace of a positive semidefinite matrix. The positive solution of \eqref{eq:9} is given as 
	\begin{equation}\label{eq:g_pos}
	g(\beta,\rho)=\frac{\sqrt{{(\beta+\rho-1)}^2+4\rho}-(\beta+\rho-1)}{2\rho}.
	\end{equation} 
For the rest of this paper, let us introduce $\bm{M}_K={({\tilde{\bm{H}}_K}^{\op{H}}\tilde{\bm{H}}_K+\alpha \eye_{N})}^{-1}$ to ease the notation. 
\end{theorem}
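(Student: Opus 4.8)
The asymptotic identity \eqref{eq:8} (with $g$ characterized by \eqref{eq:9} and \eqref{eq:g_pos}) is the classical Marchenko--Pastur fixed-point/Stieltjes-transform statement, so it may be invoked verbatim from \cite{EvTs00}; for completeness I indicate the standard resolvent argument, which simultaneously produces the approximation and the self-consistent equation \eqref{eq:9}. The first step is to recast the left-hand side of \eqref{eq:8} as a Stieltjes transform: writing $\alpha=\rho N$ and pulling out a factor $N$ gives $\bm{M}_K=\tfrac1N\bm{Q}$ with $\bm{Q}=\big(\tfrac1N\tilde{\bm{H}}_K^{\op{H}}\tilde{\bm{H}}_K+\rho\eye_N\big)^{-1}$, hence $\op{tr}(\bm{M}_K)=m_N:=\tfrac1N\op{tr}(\bm{Q})$, which is the Stieltjes transform of the empirical spectral distribution of $\tfrac1N\tilde{\bm{H}}_K^{\op{H}}\tilde{\bm{H}}_K$ evaluated at the real point $-\rho<0$. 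The plan is to show $m_N\to g(\beta,\rho)$ with $g$ solving \eqref{eq:9}, and then to solve the resulting quadratic to obtain \eqref{eq:g_pos}.

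For the second step I would use a leave-one-out (Sherman--Morrison) expansion. Let $\tilde{\cvec{a}}_j^{\op{H}}$ be the $j$th row of $\tilde{\bm{H}}_K$, so $\tfrac1N\tilde{\bm{H}}_K^{\op{H}}\tilde{\bm{H}}_K=\tfrac1N\sum_{j=1}^K\tilde{\cvec{a}}_j\tilde{\cvec{a}}_j^{\op{H}}$, and let $\bm{Q}_{(j)}=\big(\tfrac1N\sum_{l\neq j}\tilde{\cvec{a}}_l\tilde{\cvec{a}}_l^{\op{H}}+\rho\eye_N\big)^{-1}$ be the resolvent with the $j$th row deleted; note $\bm{Q}_{(j)}$ is independent of $\tilde{\cvec{a}}_j$ and $\|\bm{Q}_{(j)}\|\le 1/\rho$ uniformly in $N$. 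Taking the normalized trace of the resolvent identity $\rho\bm{Q}=\eye_N-\tfrac1N\sum_{j=1}^K\tilde{\cvec{a}}_j\tilde{\cvec{a}}_j^{\op{H}}\bm{Q}$ and then substituting the Sherman--Morrison identity $\tfrac1N\tilde{\cvec{a}}_j^{\op{H}}\bm{Q}\tilde{\cvec{a}}_j=\frac{\tfrac1N\tilde{\cvec{a}}_j^{\op{H}}\bm{Q}_{(j)}\tilde{\cvec{a}}_j}{1+\tfrac1N\tilde{\cvec{a}}_j^{\op{H}}\bm{Q}_{(j)}\tilde{\cvec{a}}_j}$ reduces the right-hand side to a sum of scalar functions of the quadratic forms $\tfrac1N\tilde{\cvec{a}}_j^{\op{H}}\bm{Q}_{(j)}\tilde{\cvec{a}}_j$ alone.

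The third step is concentration. Conditioning on the rows other than $j$, Lemma~1 (i.e.\ \cite[Corollary~1]{EvTs00}) gives $\tfrac1N\tilde{\cvec{a}}_j^{\op{H}}\bm{Q}_{(j)}\tilde{\cvec{a}}_j\approx\tfrac1N\op{tr}(\bm{Q}_{(j)})$, and because a single rank-one perturbation changes $\tfrac1N\op{tr}(\bm{Q})$ by $O(1/N)$ one also has $\tfrac1N\op{tr}(\bm{Q}_{(j)})=m_N+O(1/N)$, uniformly in $j$. Inserting these into the trace identity from the second step yields $\rho m_N\approx 1-\beta\,\tfrac{m_N}{1+m_N}$, equivalently $\rho m_N^2+(\rho+\beta-1)m_N-1\approx 0$, which is precisely the rearrangement of \eqref{eq:9}; since $\bm{Q}$ is positive definite, $m_N>0$, so the relevant root is the one displayed in \eqref{eq:g_pos}, and $m_N\to g(\beta,\rho)$.

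The main obstacle is upgrading every ``$\approx$'' above to an honest limit. One must quantify the fluctuations of the quadratic forms $\tfrac1N\tilde{\cvec{a}}_j^{\op{H}}\bm{Q}_{(j)}\tilde{\cvec{a}}_j$ about their conditional means $\tfrac1N\op{tr}(\bm{Q}_{(j)})$ --- this is exactly where the finite-eighth-moment hypothesis of Lemma~1 enters, via a Marcinkiewicz--Zygmund-type moment bound (plus Borel--Cantelli if an almost-sure statement is wanted) --- then control the accumulation of these $K=\beta N$ error terms, and finally invoke stability of the fixed-point map $m\mapsto\big(\rho+\tfrac{\beta}{1+m}\big)^{-1}$ on the relevant interval to conclude that $m_N$ is pinned to its unique positive fixed point $g(\beta,\rho)$. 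All of this is carried out in \cite{EvTs00}, so in the present paper it is enough to cite Theorem~7 there.
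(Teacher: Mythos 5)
The paper gives no proof of this statement at all---it is imported verbatim as Theorem~7 of \cite{EvTs00}---and your proposal correctly recognizes that citing that reference suffices, which is exactly what the paper does. Your accompanying resolvent/leave-one-out sketch is nonetheless sound: the trace identity combined with Sherman--Morrison and concentration of the quadratic forms does yield $\rho m_N\approx 1-\beta m_N/(1+m_N)$, i.e.\ $\rho m_N^2+(\rho+\beta-1)m_N-1\approx 0$, whose positive root is exactly \eqref{eq:g_pos}.
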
 		
	
\textbf{Corollary 1.} (\cite{MuZaEv13}) The following approximation holds under the same conditions as  Theorem ~\ref{th:theorem1}
\begin{equation}\label{eq:14}
-\alpha\op{tr}(\bm{M}_K^{2})\approx\rho \frac{\partial g(\beta,\rho)}{\partial\rho}.
\end{equation}
Corollary 1 can be easily derived by taking the derivative of \eqref{eq:8} with respect to $\alpha$ on both sides. 

	
By applying simple algebraic manipulations on	\eqref{eq:9}, we can obtain the following identities
\begin{equation}\label{eq:10}
\beta+\rho{(1+g(\beta,\rho))}^2=\frac{(1+g(\beta,\rho))^2-\beta g^2(\beta,\rho)}{g(\beta,\rho)}
\end{equation}
and 
\begin{equation}\label{eq:11}
\begin{aligned}
&\frac{\partial g(\beta,\rho)}{\partial \rho}=-\frac{g(\beta,\rho){(1+g(\beta,\rho))}^2}{\beta+\rho{(1+g(\beta,\rho))}^2}.
\end{aligned}
\end{equation}
For a given $\beta$, equation \eqref{eq:11} implies that  $g(\beta,\rho)$ is strictly decreasing in $\rho$, since $g(\beta,\rho)>0$ and $\rho\geq0$. Furthermore, one can show that
\begin{equation}\label{eq:eq_lim} 
\lim_{\rho \to \infty} g(\beta,\rho)= 0
\end{equation}	
with a simple manipulation of \eqref{eq:g_pos}.

\subsubsection{An Alternative Regularization Parameter}
Let us introduce the following term
\begin{equation}\label{eq:def_u}
 u(\beta,\rho)=\frac{g(\beta,\rho)}{1+g(\beta,\rho)}, 
\end{equation}
for which it is implied that $\lim_{\rho\to \infty}u(\beta,\rho)=0$ due to  \eqref{eq:eq_lim}. By manipulating \eqref{eq:9} and \eqref{eq:g_pos}, we compute that $\lim_{\rho\to 0}u(\beta,\rho)=1$ for $\beta\leq 1$. 
The derivative of \eqref{eq:def_u} yields
\begin{equation}\label{eq:deriv_u}
\frac{\partial u(\beta,\rho)}{\partial \rho}= \frac{1}{(1+g(\beta,\rho))^2}\frac{\partial g(\beta,\rho)}{\partial \rho},
\end{equation}
which  implies that $u(\beta,\rho)$ is  strictly decreasing in $\rho$, since from \eqref{eq:11} we inferred  that $\frac{\partial g(\beta,\rho)}{\partial\rho}<0$. Hence, for a fixed value of $\beta$,  $u(\beta,\rho)$ starts as $1$ at $\rho=0$  and strictly decreases to $0$ as $\rho\to\infty$. We hence can use $u$ as an alternative regularization parameter with a one-to-one relationship to $\alpha$ 
\begin{equation}\label{eq:rho}
\alpha(u)=N(\frac{1}{u}-\beta)(1-u),
\end{equation}  
which is obtained by combining \eqref{eq:9}, \eqref{eq:def_u} and definition of $\rho=\frac{\alpha}{N}$. Note that we denoted $u(\beta,\rho)$ as $u$ for the sake of brevity, as we do for the rest of this paper. From \eqref{eq:rho}, we observe that as \ac{RZF} precoder approaches to \ac{ZF} precoder  $u$ approaches to $1$ and as    \ac{RZF} precoder approaches to \ac{MRT} precoder, $u$ approaches to 0.

In the rest of the current section, the approximate power of the desired signal component and \ac{MUI} is expressed in terms of $u$. On the contrary to the other regularization parameter $\rho$, which is accompanied by $g(\beta,\rho)$ in signal  or \ac{MUI} power expressions (see \cite{MuZaEv13,XuXuGo19}), $u$ solely reflects  the effect of regularization on the approximate power of the desired signal component and MUI. This makes use of $u$ more practical as it leads to more intuitive and compact expressions.



	
	

\subsection{Application of Asymptotic Approximation}\label{sec:approx_SQINR}
By combining $\text{diag}(\bm{C}_{\cvec{x}\cvec{x}})=\text{diag}(\bm{P}_{K}\bm{E}_K\bm{P}_K^{\op{H}})$ and $\bm{P}_K=\bm{M}_K\tilde{\bm{H}}_K^{\op{H}}\bm{\Sigma}_K^{-\frac{1}{2}}$, we  apply  the asymptotic approximation to  $\text{diag}(\bm{C}_{\cvec{x}\cvec{x}})$  as  follows 
\begin{equation}\label{eq:pd_init}
\begin{split}
 \text{diag}(\bm{C}_{\cvec{x}\cvec{x}})&=\text{diag}(\bm{M}_K{\tilde{\bm{H}}_K}^{\op{H}}\bm{E}_K\bm{\Sigma}^{-1}_K\tilde{\bm{H}}_K\bm{M}_K)\\
&\overset{(a)}{\approx} \frac{\op{tr}(\bm{E}_K\bm{\Sigma}_K^{-1})}{K} \text{diag}(\bm{M}_K{\tilde{\bm{H}}_K}^{\op{H}}\tilde{\bm{H}}_K\bm{M}_K)\\
&\overset{(b)}{=}\frac{\op{tr}(\bm{E}_K\bm{\Sigma}_K^{-1})}{K} \text{diag}(\bm{M}_K-\alpha\bm{M}_K^{2})\\&\overset{(c)}{\approx}\frac{\op{tr}(\bm{E}_K\bm{\Sigma}_K^{-1})}{K}\bigg(\frac{\op{tr}(\bm{M}_K)}{N}-\alpha\frac{\op{tr}(\bm{M}_K^2)}{N}\bigg)\eye_N.
\end{split}
\raisetag{50pt}
	\end{equation} 
Step $(a)$ is taken by employing \eqref{eq:high_transmit_power}, step $(b)$ is taken by employing  ${\tilde{\bm{H}}_K}^{\op{H}}\tilde{\bm{H}}_K\bm{M}_K= \eye_{N}-\alpha \bm{M}_K$, step $(c)$ is by approximating   $\text{diag}(\bm{M}_K)\approx\frac{\op{tr}(\bm{M}_K)}{N}\eye_N$, when $K$ is large.

We proceed on approximating $\text{diag}(\bm{C}_{\cvec{x}\cvec{x}})$ by replacing \eqref{eq:8} and \eqref{eq:14} in \eqref{eq:pd_init} and then making  use of \eqref{eq:10} and \eqref{eq:11} 
	\begin{equation}\label{eq:15}
	\begin{aligned}
	\text{diag}(\bm{C}_{\cvec{x}\cvec{x}})&\approx \frac{\op{tr}(\bm{E}_K\bm{\Sigma}_K^{-1})}{KN} (g(\beta,\rho)+\rho \frac{\partial g(\beta,\rho)}{\partial\rho})\eye_N\\
	&= \frac{\op{tr}(\bm{E}_K\bm{\Sigma}_K^{-1})}{KN}\frac{\beta u^2}{(1-\beta u^2)} \eye_N.
	\end{aligned}
	\end{equation}

\subsubsection{Power of the Desired Signal}	
By combining \eqref{eq:15} and $\cvec{p}_k=\frac{1}{\sqrt{\sigma_k}}\bm{M}_K\tilde{\bm{h}}_{k}$, desired signal at the $k$th user  in \eqref{eq:io_decom} is approximated as
\begin{equation}
y^{(\text{s})}_k\approx\sqrt{\frac{\gamma w_k \sigma_k (1-\beta u^2)}{\beta u^2}}\xi_Q\tilde{\cvec{h}}^{\op{H}}_{k} \bm{M}_K\tilde{\cvec{h}}_ks_k,
\end{equation}
where we also introduced 
\begin{equation}\label{eq:w_k_def}
w_k=\frac{e_k/\sigma_k}{\op{tr}(\bm{E}_K\bm{\Sigma}_K^{-1})}K\geq 0\;\text{for}\; k=1,2,\ldots K.
\end{equation}
The following approximation can be derived by following the same steps as in Appendix A of \cite{MuZaEv13}, where matrix inversion lemma has been utilized
\begin{equation}
|\tilde{\cvec{h}}^{\op{H}}_{k}\bm{M}_K\tilde{\cvec{h}}_k|^{2}\approx \frac{g^2(\beta,\rho)}{(1+g(\beta,\rho))^2}=u^2.
\end{equation}

As a result, the signal power can be approximated as
\begin{equation}\label{eq:desired_fin}
\begin{aligned}
P_k^{(\text{s})}\approx\xi_Q^2\gamma\sigma_k w_k \bigg(\frac{1}{\beta}-u^2\bigg),
\end{aligned}
\end{equation}	
In \eqref{eq:desired_fin}, increasing $u$ leads to decrease in signal power.
	\subsubsection{Power of the \ac{MUI}}
	Let us now approximate the \ac{MUI} signal as follows 
	\begin{equation}
	\begin{aligned}
	&y^{(\text{mui})}_k=\sqrt{\frac{\gamma}{N}}\xi_Q\cvec{h}_k^{\op{H}}{\text{diag}(\bm{C}_{\cvec{x}\cvec{x}})}^{-\frac{1}{2}}\sum_{\substack{j\neq k}}\sqrt{\frac{e_j}{\sigma_j}}\bm{M}_K\tilde{\cvec{h}}_{j} s_j\\
	&\approx \sqrt{\frac{\gamma}{N}\frac{\sigma_k}{\frac{\op{tr}(\bm{E}_K\bm{\Sigma}_K^{-1})}{KN}\frac{\beta u^2}{1-\beta u^2}}} \xi_Q\sum_{\substack{j\neq k}} \sqrt{\frac{e_j}{\sigma_j}}\tilde{\cvec{h}}_k^{\op{H}}\bm{M}_K \tilde{\cvec{h}}_j s_j.
	\end{aligned}
	\end{equation}
Power of the \ac{MUI} is then approximated as 
	\begin{equation}\label{eq:p_mui_init}
	P^{(\text{mui})}_k\approx \xi_Q^2\frac{\gamma}{N}\frac{\sigma_k}{\frac{\op{tr}(\bm{E}_K\bm{\Sigma}_K^{-1})}{KN}\frac{\beta u^2}{1-\beta u^2}} \sum_{\substack{j\neq k}}^{K} \frac{e_j}{\sigma_j} {|\tilde{\cvec{h}}_k^{\op{H}}\bm{M}_K\tilde{\cvec{h}}_j|}^2
	\end{equation}	
	In \cite{MuZaEv13}, the term  ${|\tilde{\cvec{h}}_k^{\op{H}}\bm{M}_K\tilde{\cvec{h}}_j|}^2$ is approximated as
\begin{equation} \label{eq:mui_term}
|\tilde{\cvec{h}}_k^{\op{H}}\bm{M}_K\tilde{\cvec{h}}_j|^2\approx \frac{1}{N} \frac{g(\beta,\rho)}{{(1+g(\beta,\rho))}^2(\beta+\rho(1+g(\beta,\rho))^2)}. 
\end{equation}
By plugging in  \eqref{eq:mui_term} and \eqref{eq:10} to \eqref{eq:p_mui_init}, th \ac{MUI} power  can be written as 
	\begin{equation}\label{eq:P_mui_2}
	P^{(\text{mui})}_k\approx \xi_Q^2\gamma\sigma_k (1-u)^2 \frac{1}{K} \sum_{\substack{j\neq k}} w_j
	\end{equation}
By employing the definition of $w_k$ in \eqref{eq:w_k_def} and the high transmit power assumption in \eqref{eq:high_transmit_power}, we can write 
\begin{equation}\label{eq:wsum_approx}
\frac{1}{K}\sum_{\substack{j\neq k}}^{K} w_j\approx 1- \frac{1}{K}.
\end{equation}     
Eventually, power of the \ac{MUI} at the $k$th user reads as
\begin{equation}\label{eq:P_mui}
P^{(\text{mui})}_k\approx \xi_Q^2\gamma\sigma_k\frac{K-1}{K} (1-u)^2.
\end{equation}      
In \eqref{eq:P_mui}, increase in $u$ leads to decrease  in \ac{MUI} power.
\subsubsection{Power of the Received Quantization Distortion }
Power of the quantization distortion at the $k$th user in \eqref{eq:io_decom} is given as  
\begin{equation}
P^{(\text{q})}_k= \frac{\gamma}{N}\cvec{h}_k^{\op{H}}\bm{C}_{\cvec{d}\cvec{d}}\cvec{h}_k=\frac{\gamma\sigma_k}{N}\tilde{\cvec{h}}_k^{\op{H}}\bm{C}_{\cvec{d}\cvec{d}}\tilde{\cvec{h}}_k,
\end{equation}
where $\bm{C}_{\cvec{d}\cvec{d}}$ is the covariance matrix of the quantization distortion $\cvec{d}$. Except the case of 1-bit quantization i.e., CE quantization with $Q=4$, it is difficult to compute $\bm{C}_{\cvec{d}\cvec{d}}$ exactly. For that reason, we again resort to the asymptotic analysis. Lemma 1 in   \eqref{eq:corollary1} implies that 
\begin{equation}
\frac{1}{N}\tilde{\cvec{h}}_k^{\op{H}}\bm{C}_{\cvec{d}\cvec{d}}\tilde{\cvec{h}}_k\approx\frac{1}{N}\op{tr}(\bm{C}_{\cvec{d}\cvec{d}}),
\end{equation}
which leads to the following approximation
\begin{equation}
P^{(\text{q})}_k\approx\frac{\gamma\sigma_k}{N}\op{tr}(\bm{C}_{\cvec{d}\cvec{d}}).
\end{equation}
Recall that asymptotic approximation assumptions include  $K$ being large so that the precoded signal $\cvec{x}=\bm{P}_K\bm{E}_K^{\frac{1}{2}}\cvec{s}_K$ can be approximated  Gaussian  due to \ac{CLT} \cite{PaRV89}. By using Bussgang decomposition in \eqref{eq:bussgang_decom} and the fact that  $\op{E}[\cvec{x}\cvec{d}^{\op{H}}]\approx\bm{0}$ with  approximately Gaussian $\cvec{x}$, we write the following approximation
\begin{equation}
\begin{aligned}
&\op{tr}(\bm{C}_{\cvec{t}\cvec{t}})=\op{tr}(\op{E}[(\bm{B}\cvec{x}+\cvec{d})(\bm{B}\cvec{x}+\cvec{d})^{\op{H}}])\\
&\approx \xi_Q^2\op{tr}(\text{diag}(\bm{C}_{\cvec{x}\cvec{x}})^{-\frac{1}{2}}\bm{C}_{\cvec{x}\cvec{x}}\text{diag}(\bm{C}_{\cvec{x}\cvec{x}})^{-\frac{1}{2}})+\op{tr}(\bm{C}_{\cvec{d}\cvec{d}}).
\end{aligned}
\end{equation} 
Diagonals of $\bm{C}_{\cvec{t}\cvec{t}}$  and  $\text{diag}(\bm{C}_{\cvec{x}\cvec{x}})^{-\frac{1}{2}}\bm{C}_{\cvec{x}\cvec{x}}\text{diag}(\bm{C}_{\cvec{x}\cvec{x}})^{-\frac{1}{2}}$ consist of ones. As a result, power of the distortion $\cvec{d}$ reads as 
\begin{equation}
\op{tr}(\bm{C}_{\cvec{d}\cvec{d}})\approx(1-\xi_Q^2)N.
\end{equation}
Eventually, quantization distortion power  is computed as
\begin{equation}\label{eq:dist_pow}
P^{(\text{q})}_k\approx (1-\xi_Q^2)\gamma\sigma_k
\end{equation}
\subsubsection{Approximate SQINR }
At this step, we again resort to the asymptotic analysis assumption that  $K$ is large, which helps us to characterize the precoded signal $\cvec{x}$ as approximately Gaussian due to \ac{CLT}. In this case, the quantization distortion $\cvec{d}$ and $\cvec{s}$ can be considered  uncorrelated so that we can combine \eqref{eq:desired_fin}, \eqref{eq:P_mui} and  \eqref{eq:dist_pow} to approximate the SQINR at the $k$th user  as follows 	
\begin{equation}\label{eq:SQINR}
\text{SQINR}_k(w_k,u)\approx\frac{(\frac{1}{\beta}-u^2)}{\tau (1-u)^2+c_k} w_k,\; \text{for} \; k=1,2 \ldots K,  
\end{equation}
where we introduced and $\tau=\frac{K-1}{K}$ and 
\begin{equation}\label{eq:c_k}
c_k=\frac{1}{\xi_Q^2}(1+\frac{1}{\gamma \sigma_k})-1.
\end{equation}
An important fact that we use later is  $c_k>0$, since for \ac{CE} quantized systems $\xi_Q<1$. Furthermore, the approximate  SQINR in \eqref{eq:SQINR} is not changed with  positive scaling of $\bm{E}_K$ since  positive scaling of $\bm{E}_K$ does not change $(w_1,w_2,\ldots w_K)$ in \eqref{eq:w_k_def}. This is  due to \ac{CE} quantization being invariant to scaling with a positive term  such that  $\mathcal{Q}(\cvec{x})=\mathcal{Q}(v\cvec{x})$.

Equation \eqref{eq:SQINR} can be used to approximate the systems with infinite resolution quantization by setting $\xi_Q$ to 1. Such an approximation is  equivalent to approximation of \ac{SINR} in \cite{MuZaEv13}. Yet, due to especially making use of \eqref{eq:10} and introduction of $u$,  \eqref{eq:SQINR} is more compact  than its counterpart   \cite[Eq. (4)]{MuZaEv13}. Also, equation \eqref{eq:SQINR} is more intuitive  as it shows how regularization affects powers of the desired signal and \ac{MUI}, whereas such an observation cannot be made in   \cite[Eq. (4)]{MuZaEv13}.

\section{Approximately Equivalent System}\label{sec:approx_sys}
Equation  \eqref{eq:SQINR} shows the approximate \acp{SQINR}, when \ac{RZF} precoder is designed for $K$ users. The exact achievable rates of users cannot be  computed  via \eqref{eq:SQINR}, since  distribution of the quantization distortion $\cvec{d}$ is not available in an analytic form. Instead, a lower bound to the achievable rates of the system in \eqref{eq:SQINR} is defined by assuming a mismatched decoder at the receiver \cite{ArLoVo06}. We here consider that  receivers decode based on the Gaussian auxiliary channel, i.e., the decoding is done by assuming the quantization distortion is Gaussian distributed. This system with mismatched decoding is equivalent to the following system of $K$ parallel SISO channels   
	\begin{subequations}
	\label{eq:lower_bound}
	\begin{align}
	\label{eq:lower_bound_m_1}
	&y_k=\sqrt{\lambda_k(u)w_k} s_k+\eta_k,\\\label{eq:lower_bound_m_2}
	&\lambda_k(u)=\frac{(\frac{1}{\beta}-u^2)}{\tau (1-u)^2+c_k},\\\label{eq:lower_bound_m_3}
	&w_k=\frac{K \frac{e_k}{\sigma_k}}{\op{tr}(\bm{E}_K \bm{\Sigma}_K^{-1})} \;\text{for}\; k=1,2,\ldots K,
	\end{align}
	\end{subequations} 
where $s_k$ and $\eta_k$ are the same input and the noise signals as in \eqref{eq:io_decom}, and $\lambda_k(u)$ and $w_k$ are the channel gain and the transmit power of the $k$th user, respectively. Note that \eqref{eq:lower_bound_m_3} imposes a transmit power constraint $\sum_{k=1}^{K}w_k=K$ and the receive \ac{SNR} of the $k$th user  reads as $\text{SNR}_k(u,w_k)=\lambda_k(u)w_k$, which is equal to approximate SQINR in \eqref{eq:SQINR}. Furthermore,  the channel gains are in nonincreasing order such that $\lambda_1(u)\geq \lambda_2(u)\ldots \geq\lambda_K(u)$, because of large-scale fading coefficients being in nonincreasing order. For the sake of brevity, we introduce the transmit power vector $\cvec{w}={ \begin{bmatrix}
w_1, w_2, \ldots w_K \end{bmatrix}}^{\op{T}}$.

The $K$ parallel SISO channels system in \eqref{eq:lower_bound} is approximately equivalent to the quantized  MIMO downlink system in \eqref{eq:io_decom}. There is  a one-to-one relation between the original and the alternative regularization parameters $\alpha$ and $u$, which is given by \eqref{eq:rho}. A relation between $\cvec{w}$ and $\bm{E}_K$, which is helpful to transform one into the other,  is also present in \eqref{eq:lower_bound_m_3}. As a result, sum rate maximization of system in Fig.~\ref{fig:downlink_quan} can approximately  be done via the parallel SISO channels system in \eqref{eq:lower_bound}. To this aim, we formulate and solve the sum rate maximization problem for system in \eqref{eq:lower_bound}. 

\subsection{Sum Rate Maximization of the Approximate System}
The original sum rate maximization problem \eqref{eq:original_prob_with_K}  can be transformed to sum rate maximization for \eqref{eq:lower_bound} as follows
\begin{equation}\label{eq:original_prob_2}
\underset{\substack{u\in[0,1], \\ K \in \{1,2,\ldots M\},\\ \cvec{w}\geq \cvec{0}_K}}{\textrm{argmax}} \sum_{k=1}^{K} I(\lambda_k(u,K)w_k) \; \text{s.t} \; \sum_{k=1}^K w_k=K, 
\end{equation}
where $I(\lambda_k(u,K)w_k)=I(s_k;\sqrt{\lambda_k(u,K)w_k}s_k+\eta_k)$ is the mutual information between the input signal $s_k$ and received signal $y_k$. Notice that we changed the notation from  $\lambda_k(u)$
to $\lambda_k(u,K)$ since we now consider $K$ also as an optimization variable. Essentially, we now see the channel gain $\lambda_k(u,K)$ dependent on $K$ via $\beta(K)=\frac{K}{N}$ and $\tau(K)=\frac{K-1}{K}$. 

A well-known result from \cite{GuShVe05} states that the mutual information between the $k$th input and output $I_k(\lambda_k(u,K)w_k)$ is a concave function of $\lambda_k(u,K)w_k$. However, as $\lambda_k(u,K)$ is not concave in $u$, the objective function of \eqref{eq:original_prob_2} is not jointly concave in $\cvec{w}$, $u$ and $K$. Typically a nonconvex optimization  with $M+2$ variables such as  \eqref{eq:original_prob_2} is impractical  due to its  high computational complexity even for moderate values of $M$.  Yet, as we use the alternative regularization parameter $u$ in formulating \eqref{eq:original_prob_2} and the optimal power allocation for fixed $(u,K)$ can be computed, we can  provide a method that obtains the global maximum with an feasible computational complexity. Let us first discuss the power allocation problem with fixed $u$ and $K$, which provides a basis for the algorithms we devise to solve \eqref{eq:original_prob_2}.

\subsection{Power Allocation with Fixed $u$ and $K$}\label{sec:opt_pow_fix_u}
For a parallel SISO  system with fixed $u$ and $K$, the sum rate maximization is reduced to  the following convex problem 
\begin{equation} \label{eq:fixed_u}
\cvec{w}^{\star}(u,K)=\underset{\cvec{w}\geq \cvec{0}_K}{\textrm{argmax}} \sum_{k=1}^{K} I_k(\lambda_k(u,K) w_k) \; \text{s.t} \; \sum_{k=1}^K w_k=K.
\end{equation}
The optimal solution of  problem \eqref{eq:fixed_u} is given in   \cite{LoTuVe05} as 
\begin{equation}\label{eq:solution_1}
w_k^{\star}(u,K)=\frac{1}{\lambda_k(u,K)} \text{MMSE}^{-1}(\min(1,\frac{\mu}{\lambda_k(u,K)}))\; \forall k,
\end{equation}
with $\mu$ satisfying
\begin{equation}\label{eq:solution_2}
\sum_{\substack{k=1,\\\lambda_k(u,K)>\mu}}^{K} \frac{1}{\lambda_k(u,K)} \text{MMSE}^{-1}(\frac{\mu}{\lambda_k(u,K)})=K.
\end{equation}
The function $\text{MMSE}^{-1}(\bullet)$, with the domain in the interval $[0,1]$, is the inverse function of $\text{MMSE}(\text{SNR})$, which is the minimum mean square error achieved by the conditional mean estimator in the channel 
\begin{equation}
y_k=\sqrt{\text{SNR}}s_k+\eta_k,
\end{equation}
where $\eta_k\sim\mathcal{CN}(0,1)$ and $s_k$ has the same distribution as in \eqref{eq:lower_bound_m_1}. The value of $\text{MMSE}(\text{SNR})$ can be computed offline by integration over the complex field and it depends on the distribution of $s_k$. For the details on  computing  $\text{MMSE}(\text{SNR})$, the reader is referred to \cite{LoTuVe05}.  A look-up table can then be generated to implement the function $\text{MMSE}^{-1}(\bullet)$  by using the pre-computed values of $\text{MMSE}(\bullet)$. The value of $\mu$ satisfying \eqref{eq:solution_2} can be found by the bisection search method and using the generated look-up table. This procedure of obtaining solutions for \eqref{eq:solution_1} and \eqref{eq:solution_2} is referred as mercury/waterfilling \cite{LoTuVe05}.

\section{Sum Rate Maximization Algorithms}\label{sec:opt_alg}
We  here present the branch and bound method and the alternating optimization to solve  \eqref{eq:original_prob_2}. The former has significantly higher computational complexity than the latter, yet it  obtains the global maximum, whereas the latter algorithm converges to a local maximum. 
\subsection{Global Optimization via the Branch and Bound Method}\label{sec:bb}
As the sum rate maximizing power allocation vector for fixed $u$ and $K$ is given as  $\cvec{w}^\star(u,K)$,  problem  \eqref{eq:original_prob_2} can be rewritten as 
\begin{equation}\label{eq:prob_again}
\underset{(u,K)\in\mathcal{S}_0}{\max} \sum_{k=1}^K I(\lambda_k(u,K)w_k^\star(u,K)), 
\end{equation}
with $\mathcal{S}_0=\{(u,K)|u\in [0,1], K\in\{1,2,\ldots M\}\}$ such that the optimization variables are reduced to $u$ and $K$.

To solve \eqref{eq:prob_again}, we modify the \ac{MMP} framework from  \cite{MaHeJoUt20}, which enables an efficient application of the branch and bound algorithm \cite[Sec. 6.2]{Tuy98}. 
The branch and bound method consists of a search for the global optimum by systematically dividing  an initial constraint set into disjoint subsets and checking which subsets potentially contain the optimum via defined upper and lower bounds. In our case, the initial set which contains the global maximum is given as  $\mathcal{S}_0$,

For a given subset  $\mathcal{S}=\{(u,K)| \,u\in[u_\text{L},u_\text{U}],\,K\in\{K_{\text{L}},\ldots K_\text{U}\}\}$, the maximum sum rate can be formulated as  
\begin{equation}\label{eq:sum_rate_box}
R(\mathcal{S})=\underset{(u,K)\in\mathcal{S}}{\max} \underset{\cvec{w}\geq\cvec{0}}{\max} \sum_{k=1}^K I(\lambda_k(u,K)w_k) \;\text{for}\; \sum_{k=1}^{K}w_k=K.
\end{equation} 
An upper bound to \eqref{eq:sum_rate_box} can be defined as 
\begin{equation}\label{eq:upper_bound_box}
U(\mathcal{S})=\underset{\substack{\cvec{w}\geq \cvec{0}}}{\max} \sum_{k=1}^{K_\text{U}} I(\Lambda_k(u_\text{L},u_\text{U},K_\text{L})w_k) \; \text{s.t.} \; \sum_{k=1}^{K_\text{U}}w_k=K_\text{U},
\end{equation}
where we introduce the enhanced channel gain for the $k$th user as
\begin{equation}\label{eq:channel_enhance}
\Lambda_k(u_\text{L},u_\text{U},K_\text{L})=\frac{(\frac{1}{\beta(K_\text{L})}-u_\text{L}^2)}{\tau(K_\text{L}) (1-u_\text{U})^2+c_k}
\end{equation}
for $k=1,\ldots, K_{\text{U}}$. When introducing $\Lambda_k(u_\text{L},u_\text{U},K_\text{L})$, we apply the \ac{MM} formulation to $\lambda_k(u,K)$ such that we replace $u$ in the numerator and denominator of \eqref{eq:lower_bound_m_2} by  $u_\text{L}$  and $u_\text{U}$, respectively  and $K$ by $K_\text{L}$. As  the numerator and denominator in \eqref{eq:lower_bound_m_2}  are decreasing in $u$ and $\lambda_k(u,K)$ is decreasing in $K$, it holds that
\begin{equation}\label{eq:strong_channel}
\Lambda_k(u_\text{L},u_\text{U},K_\text{L})\geq\underset{(u,K)\in\mathcal{S}}{\textrm{max}}\lambda_k(u,K) \;\text{for} \; k=1,\ldots, K_\text{U}.
\end{equation}
To see the whole derivation of upper bound \eqref{eq:upper_bound_box}, the reader is referred to Appendix \ref{sec:upp_bound}.  

A lower bound to $R(\mathcal{S})$ can be generated by using any element $(u,K)\in\mathcal{S}$ and the corresponding optimal $\cvec{w}^{\star}(u,K)$. Here, we select  $(u_\text{U},K_\text{L})$ to generate the lower bound as follows
\begin{equation}\label{eq:lower_bound_box}
L(\mathcal{S})=\sum_{k=1}^{K_\text{L}} I(\lambda_k(u_\text{U},K_\text{L})w_k^{\star}(u_{\text{U}},K_\text{L})).
\end{equation}
With the upper and  lower bounds  defined in \eqref{eq:upper_bound_box} and \eqref{eq:lower_bound_box}, we apply two tests on  a subset to see if it potentially contains the solution of  \eqref{eq:prob_again}. The first one is to check if a subset's  upper bound is smaller than a lower bound of any other subset.  Such a subset is guaranteed not to contain the solution of \eqref{eq:prob_again} and it can be discarded from the search.   
The second test is to check if the optimal power allocation vector in \eqref{eq:upper_bound_box}, which we denote as $\cvec{w}_\text{U}(\mathcal{S})$, has less nonzero entries than $K_\text{L}$.  In Appendix \ref{sec:app},
\Cref{theorem:U_S} states that such subsets do not contain  the solution of \eqref{eq:prob_again}.

Note that  generating the upper and lower bounds with given  $\Lambda_k(u_\text{L},u_\text{U},K_\text{L})$, $\cvec{w}_\text{U}(\mathcal{S}),\lambda_k(u_\text{U}$, $K_\text{L})$, $w_k^{\star}(u_{\text{U}},K_\text{L})$ requires to compute  mutual information $I(\text{SNR})$ for a given SNR value. This can be again implemented  by offline numerical integration and use of the look-up tables. 

\begin{algorithm}[!]
\caption{Branch and Bound for \eqref{eq:prob_again}}
\begin{algorithmic}[1]\label{alg:bb}
\State Initialize union set all subsets  as $\mathbb{S}=\{\mathcal{S}_0\}$. 
\State Compute $U(\mathcal{S})$ for each subset in $\mathbb{S}$ by  \eqref{eq:upper_bound_box}.
\State Check $\cvec{w}_{\text{U}}(\mathcal{S})$  for each subset in $\mathbb{S}$. Remove  every subset with  $\cvec{w}_\text{U}(\mathcal{S})$ containing   less nonzero elements than its $K_\text{L}$.
\State  Compute  $L(\mathcal{S})$ for each subset in  $\mathbb{S}$ by \eqref{eq:lower_bound_box}
\State  Remove  subsets with $U(\mathcal{S})<\max_{\mathcal{S}\in\mathbb{S}}L(\mathcal{S})$.
\State Find the subset $\hat{\mathcal{S}}=\textrm{argmax}_{\mathcal{S}\in\mathbb{S}} U(\mathcal{S})$. Replace $\mathbb{S}$ by $\{\mathbb{S}\setminus\hat{\mathcal{S}}\}\cup\{\mathcal{S}_1,\mathcal{S}_2\}$ by using \eqref{eq:subset_div1} and \eqref{eq:subset_div2}.
\State Repeat  Steps 2-6  until $\max_{\mathcal{S}\in\mathbb{S}}U(\mathcal{S})-\max_{\mathcal{S}\in\mathbb{S}}L(\mathcal{S})\leq\epsilon$
\State  Return $u_{\text{U}}, K_\text{L}$ and $\cvec{w}^\star(u_\text{U},K_\text{L})$ of subset $\textrm{argmax}_{\mathcal{S}\in\mathbb{S}}L(\mathcal{S})$
\end{algorithmic}
\end{algorithm}

Application of the branch and bound method to solve \eqref{eq:prob_again} is described in \Cref{alg:bb}. First, the set $\mathbb{S}$, which is the union of all disjoint subsets that potentially contain the maximizer of \eqref{eq:prob_again}, is initialized as $\mathbb{S}=\{\mathcal{S}_0\}$. Then, the upper bounds to the sum rates of all disjoint subsets in $\mathbb{S}$ are computed. All subsets that have $\cvec{w}_{\text{U}}(\mathcal{S})$ with less nonzero elements than $K_\text{L}$ are removed from further search at Step 3. Then, the lower bounds to the maximum sum rates for the remaining subsets in $\mathbb{S}$ are computed. Subsets with a smaller upper bound than the maximum lower bound of all subsets are removed from  $\mathbb{S}$ at Step 5. At Step 6, the subset with the highest upper bound $\hat{\mathcal{S}}=\{(u,K)| \,u\in[\hat{u}_\text{L},\hat{u}_\text{U}],\,K\in\{\hat{K}_{\text{L}},\ldots\hat{K}_\text{U}\}\}$ is replaced by two disjoint subsets $\mathcal{S}_1$ and $\mathcal{S}_2$ which are determined in two following ways. If $\hat{u}_\text{U}-\hat{u}_\text{L}>\hat{K}_\text{U}-\hat{K}_\text{L}$, then it is set
\begin{equation}\label{eq:subset_div1}
\begin{aligned}
&\mathcal{S}_1=\{(u,K)| \,u\in[\hat{u}_\text{L},\frac{\hat{u}_\text{L}+\hat{u}_\text{U}}{2}],\,K\in\{\hat{K}_{\text{L}},\ldots\hat{K}_\text{U}\}\}\\
&\mathcal{S}_2=\{(u,K)| \,u\in[\frac{\hat{u}_\text{L}+\hat{u}_\text{U}}{2},\hat{u}_{\text{U}}],\,K\in\{\hat{K}_{\text{L}},\ldots\hat{K}_\text{U}\}\}
\end{aligned}
\end{equation} 
else if $\hat{u}_\text{U}-\hat{u}_\text{L}<\leq\hat{K}_\text{U}-\hat{K}_\text{L}$, it is set
\begin{equation}\label{eq:subset_div2}
\resizebox{\columnwidth}{!}{$\begin{aligned}
&\mathcal{S}_1=\{(u,K)| \,u\in[\hat{u}_\text{L},\hat{u}_{\text{U}}],\,K\in\{\hat{K}_{\text{L}},\ldots\floor{\frac{\hat{K}_{\text{L}}+\hat{K}_\text{U}}{2}}\}\}\\
&\mathcal{S}_2=\{(u,K)| u\in[\hat{u}_\text{L},\hat{u}_{\text{U}}],\, K\in\{\floor{\frac{\hat{K}_{\text{L}}+\hat{K}_\text{U}}{2}}+1,\ldots K_{\text{U}}\}\}.
\end{aligned}$}
\end{equation} 
Upper bounds of $\mathcal{S}_1$ and $\mathcal{S}_2$ are refined such that they are certainly less than or equal to the upper bound of $\hat{\mathcal{S}}$. As a result, repeating steps between 2 and 6 eliminates the subsets that are guaranteed not to contain the maximizer of \eqref{eq:prob_again} and the remaining subsets are divided into smaller subsets so that the gap between the remaining upper and lower bounds decrease. As the initial set $\mathcal{S}_0$ contains the global maximum, the convergence proof in \cite{MaHeJoUt20} suggests that it is guaranteed obtain $\epsilon$-optimal solution for \eqref{eq:prob_again} with \Cref{alg:bb}. 
Note that  the \ac{MM} formulation in \eqref{eq:channel_enhance}, which enables solving  \eqref{eq:prob_again} optimally, is not possible  without introduction of $u$.  
\subsection{Suboptimal Sum Rate Maximization via Alternating Optimization}\label{sec:alt_opt}
The second algorithm  we propose to solve  \eqref{eq:original_prob_2} is an alternating optimization process  where $\cvec{w}$ is updated by solving \eqref{eq:fixed_u} for a fixed $(u,K)$ and $u$ is updated by solving the sum rate maximization for fixed $\cvec{w}$ and $K$. To this aim, we  formulate the problem of finding the optimal $u$ with  fixed $K$ and $\cvec{w}$ as  
\begin{equation} \label{eq:fixed_w}
{u}^\star(\cvec{w},K)=\underset{\substack{u\in [0,1]}}{\textrm{argmax}} \sum_{k=1}^{K} I(\lambda_k(u,K)w_k).
\end{equation}
As $\lambda_k(u,K)$ is not concave in $u$, \eqref{eq:fixed_w} is not a convex problem. Yet, we can   obtain a local maximum for  \eqref{eq:fixed_w} by solving  
\begin{equation}\label{eq:deriv}
\sum_{k=1}^{K} \frac{\partial I(\text{SNR}_k(u, w_k))}{\partial u}\bigg|_{u=u^\star(\cvec{w},K)}=0.
\end{equation}
We switched notation to  $\text{SNR}_k(u,w_k)=\lambda_k(u,K)w_k$, for the sake of brevity. To obtain  $u^\star$, we first use the chain rule 
\begin{equation}\label{eq:der_decom}
\frac{\partial I(\text{SNR}_k(w_k,u))}{\partial u}= \frac{\partial I(\text{SNR}_k(w_k,u))}{\partial\text{SNR}_k(w_k,u)}\frac{\partial\text{SNR}_k(w_k,u)}{\partial u}.
\end{equation}
For the first term on the right in \eqref{eq:der_decom}, we know that \cite{GuShVe05} 
\begin{equation}\label{eq:mmse}
 \frac{\partial I(\text{SNR}_k(w_k,u))}{\partial\text{SNR}_k(w_k,u)}= \frac{\text{MMSE}(\text{SNR}_k(w_k,u))}{\mathrm{ln}2}.
\end{equation} 
The second term on the right hand side of \eqref{eq:der_decom} reads as 
\begin{equation}\label{eq:snr_drv}
\frac{\partial\text{SNR}_k(w_k,u)}{\partial u}=\frac{-2 c_k u+ 2 \tau(1-u)(\frac{1}{\beta}-u)}{(\tau(1-u)^2+c_k)^2}w_k.
\end{equation}
By using \eqref{eq:snr_drv} and \eqref{eq:mmse}, we execute a bisection search to find a  local maximum in $[0,1]$. We use the MMSE look-up table generated to compute the mercury/waterfilling solution of \eqref{eq:solution_2} to compute \eqref{eq:der_decom}. \Cref{theorem:existence}  implies a local maximum of \eqref{eq:fixed_w} is certainly obtained by bisection search for \eqref{eq:deriv}.

\begin{theorem}\label{theorem:existence}
 There exists at least one $u$ value  that satisfies \eqref{eq:deriv} in interval $[0,1]$.
\end{theorem}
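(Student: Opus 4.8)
The plan is to recognize that \eqref{eq:deriv} is exactly the first–order stationarity condition for the sum–rate objective $f(u)\coloneqq\sum_{k=1}^{K} I(\mathrm{SNR}_k(u,w_k))$ on the closed interval $[0,1]$, and to produce a root of $f'$ via the intermediate value theorem together with a sign analysis of $f'$ at the endpoints $u=0$ and $u=1$. First I would check that $f$ is continuously differentiable on $[0,1]$: since $K\le M\le N$ we have $\beta=K/N\le 1$, so $\tfrac{1}{\beta}-u^2\ge 0$ for all $u\in[0,1]$, hence each $\mathrm{SNR}_k(u,w_k)=\lambda_k(u,K)w_k$ is a well-defined, nonnegative, smooth function of $u$ (its denominator $\tau(1-u)^2+c_k$ is bounded away from zero because $c_k>0$), and $I(\cdot)$ is smooth on $[0,\infty)$ by the I–MMSE relation \eqref{eq:mmse}; thus the finite sum of compositions $f$ is $C^1$ on $[0,1]$.

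Next I would use the chain rule \eqref{eq:der_decom}–\eqref{eq:snr_drv} to write
\[
f'(u)=\frac{1}{\ln 2}\sum_{k=1}^{K}\mathrm{MMSE}\big(\mathrm{SNR}_k(u,w_k)\big)\,\frac{-2c_k u+2\tau(1-u)(\tfrac{1}{\beta}-u)}{\big(\tau(1-u)^2+c_k\big)^2}\,w_k,
\]
and evaluate it at the two endpoints. At $u=1$ the numerator of $\partial\,\mathrm{SNR}_k/\partial u$ collapses to $-2c_k<0$, while $\mathrm{MMSE}(\cdot)>0$ and $\mathrm{SNR}_k$ stays finite; since $\sum_k w_k=K>0$, at least one $w_k>0$, so $f'(1)<0$. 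At $u=0$ that numerator equals $2\tau/\beta\ge 0$, so every summand is nonnegative and $f'(0)\ge 0$; when $K\ge 2$ we have $\tau=\tfrac{K-1}{K}>0$, hence $f'(0)>0$ strictly.

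Finally, for $K\ge 2$, $f'$ is continuous on $[0,1]$ with $f'(0)>0>f'(1)$, so the intermediate value theorem yields some $u^\star\in(0,1)$ with $f'(u^\star)=0$, which is precisely \eqref{eq:deriv}; equivalently, $f$ attains its maximum on the compact set $[0,1]$, and the endpoint signs rule out the maximizer lying at $0$ or $1$, so it is an interior stationary point by Fermat's rule. The degenerate case $K=1$ gives $\tau=0$, hence $f'(0)=0$, and $u^\star=0$ already satisfies \eqref{eq:deriv}. The only delicate points I anticipate are justifying that $\mathrm{MMSE}(\mathrm{SNR})$ is continuous and strictly positive for every finite SNR — so that the endpoint sign of $f'$ is inherited unchanged from the sign of $\partial\,\mathrm{SNR}_k/\partial u$ and no cancellation across the sum can occur — and bookkeeping the trivial $K=1$ boundary case; the rest is a straightforward IVT argument.
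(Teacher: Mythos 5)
Your proposal is correct and follows essentially the same route as the paper's proof: split off the $K=1$ case (where $u^\star=0$ is the stationary point), and for $K\ge 2$ read the sign of $f'$ at the endpoints from the quadratic numerator $-2c_k u+2\tau(1-u)(\tfrac{1}{\beta}-u)$, which gives $f'(0)>0$ and $f'(1)<0$, and conclude by the intermediate value theorem. Your added care about continuity of $\mathrm{MMSE}$, positivity of at least one $w_k$, and the $C^1$ smoothness of the objective only makes explicit what the paper leaves implicit.
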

\begin{proof}
In this proof, we denote $u^\star(\cvec{w},K)$ as $u^\star$ for the sake of brevity. First, let us consider the case with  $K=1$, where we compute that  
\begin{equation}\label{eq:snr_drv_M=1}
\frac{\partial\text{SNR}_1(w_1,u)}{\partial u}=-2\frac{u} {c_1}{w_1}.
\end{equation}
For $K=1$, it holds that
\begin{equation}\label{eq:sum_deriv}
\frac{\partial I(\text{SNR}_1(w_1,u))}{\partial u}\bigg|_{u=u^\star}=-2\frac{\text{MMSE}(\text{SNR}_1(w_1,u^{\star}))}{\mathrm{ln}2}\frac{u^{\star} w_1}{c_1},
\end{equation}
which is equal to 0 only if $u^{\star}=0$, since $c_1$ and $\text{MMSE}(\text{SNR}_1(w_1,u^\star))$ are strictly positive. Hence,  a local maximum in $[0,1]$ is obtained at  $u^{\star}=0$.

Now let us consider  $K\geq2$.
Observe that the denominator in \eqref{eq:snr_drv} is  positive and $w_k$ is nonnegative. Hence, y the sign of \eqref{eq:snr_drv} is simply determined by the sign of the convex quadratic expression $-2 c_k u+ 2 \tau(1-u)(\frac{1}{\beta}-u)$.  We compute that   
\begin{equation}\label{eq:u=0}
(-2 c_k u+ 2 \tau(1-u)(\frac{1}{\beta}-u))\bigg|_{u=0}=2 \frac{\tau}{\beta} > 0,
\end{equation}
which implies $\sum_{k=1}^{K} \frac{\partial I(\text{SNR}_k(w_k,u))}{\partial u}\bigg|_{u=0}>0$. For $u=1$, it holds that
\begin{equation}\label{eq:u=1}
{(-2 c_k u+ 2 \tau(1-u)(\frac{1}{\beta}-u))\bigg|}_{u=1}=-2c_k<0,
\end{equation}
which implies that $\sum_{k=1}^{K} \frac{\partial I(\text{SNR}_k(w_k,u))}{\partial u}\bigg|_{u=1}< 0$. Thus, there exists a sign change from plus to minus implying the existence of at least one local maximum of the sum rate in $u\in[0,1]$. 	
\end{proof}



\Cref{alg:alter} describes the alternating optimization we propose to solve \eqref{eq:original_prob_2}.  The algorithm is initialized with  the \ac{ZF} precoder designed for all $M$ users within the coverage area, i.e., $u^{(0)}=1$ and $K^{(0)}=M$.  After the initialization,  updates of $\cvec{w}$, $K$ and $u$ take place in order  until a while loop convergences. A single iteration of the while loop starts with the update of power allocation vector $\cvec{w}$ at Step 4 according to the most recent channel gains of active users, which are computed in Step 3 as  $(\lambda_1, \lambda_2,\ldots, \lambda_{K^{(i)}})$. At Step 5, the update of $K$ is done by checking if
$\cvec{w}^{(i+1)}$ has zeros. If there are no zeros, the update is as  $K^{(i+1)}= K^{(i)}$. On the other hand, if  $\cvec{w}^{(i+1)}$ has zeros, then designing the \ac{RZF} precoder with  $K^{(i)}$ users is  suboptimal.  \Cref{lemma:obs_2} in  Appendix~\ref{sec:app} implies that, higher sum rate is achieved by  removing zeros from $\cvec{w}^{(i+1)}$,  setting  $K^{(i+1)}$ as the new length of  $\cvec{w}^{(i+1)}$ and scaling $\cvec{w}^{(i+1)}$. 
At Step 6, $u^{(i+1)}$ is updated  by solving \eqref{eq:fixed_w} for fixed $ K^{(i+1)}$ and  $\cvec{w}^{(i+1)}$. The process between Steps 3-7 is repeated until convergence for which we select the criteria as $|u^{(i+1)}-u^{(i)}|<\epsilon$.
\begin{algorithm}[!]	
\caption{Alternating Optimization to Solve \eqref{eq:original_prob_2}}
\begin{algorithmic}[1]\label{alg:alter}
\State Initialize $i=0$, $u^{(0)}=1$, $K^{(0)}=M$, $\beta^{(0)}=\frac{K^{(0)}}{N}$, $\tau^{(0)}=\frac{K^{(0)}-1}{K^{(0)}}$.
\State \textbf{while} (termination criteria is not met)
\State Compute $(\lambda_1,\lambda_2,\ldots,\lambda_{K^{(i)}})$ with $u^{(i)}$, $K^{(i)}$, $\beta^{(i)}$, $\tau^{(i)}$ by using \eqref{eq:lower_bound_m_2}.
\State Solve \eqref{eq:fixed_u} with computed $(\lambda_1,\lambda_2,\ldots,\lambda_{K^{(i)}})$ by the mercury/waterfilling to obtain $\cvec{w}^{(i+1)}$ 
\State Check $\cvec{w}^{(i+1)}$ and remove if there are any zeros.   Update $K^{(i+1)}$ as number of elements of  $\cvec{w}^{(i+1)}$. Update $\beta^{(i+1)}=\frac{K^{(i+1)}}{N}$, $\tau^{(i+1)}=\frac{K^{(i+1)}-1}{K^{(i+1)}}$. Update $\cvec{w}^{(i+1)}= \frac{K^{(i+1)}}{K^{(i)}}\cvec{w}^{(i+1)}$. 
\State Solve \eqref{eq:fixed_w} with fixed $\cvec{w}^{(i+1)}$, $K^{(i+1)}$  to obtain $u^{(i+1)}$.
\State Update $i=i+1$
\State \textbf{end}
\State Return $u^{(i)}$,  $K^{(i)}$ and $\cvec{w}^{(i)}$
\end{algorithmic}
\end{algorithm}
\subsection{Mapping Back to the Original Parameters}\label{sec:map_back}
Let us denote  the optimized  parameters returned by \Cref{alg:bb} and \Cref{alg:alter} as  $u^\star$, $\cvec{w}^\star$ and $K^\star$. The  obtained $(u^\star$, $\cvec{w}^\star)$   must be mapped back to original parameters $(\alpha,\bm{E}_K)$ to precode $\cvec{x}$ in the actual system in Fig.~\ref{fig:downlink_quan}. The regularization parameter is mapped back as  $\alpha^\star=N(\frac{1}{u ^\star}-\beta(K^\star))(1-u^\star)$ by using the one-to-one relation in \eqref{eq:rho}. On the other hand, the relation between $\cvec{w}$ and $\bm{E}_K$ is not one-to-one and  any positive scaling of of matrix $\text{diag}({\cvec{w}^\star})\bm{\Sigma}_{K^{\star}}$ can be used as $\bm{E}_K$, we simply select  $\bm{E}_{K}=\text{diag}(\cvec{w}^\star)\bm{\Sigma}_{K^\star}$.  Eventually, the \ac{RZF} precoder is formed as $\bm{P}_K= \bm{H}_{K^\star}^{\op{H}}(\bm{H}_{K^\star}\bm{H}^{\op{H}}_{K^\star}+\alpha\bm{\Sigma}_{K^\star})^{-1} \in\mathbb{C}^{N\times K^\star}$ and precoding is done as $ \cvec{x}=\bm{P}_K{\bm{E}^{\star}_{K}}^{\frac{1}{2}}\cvec{s}_{K^\star}$.


\section{Analysis of \Cref{alg:alter}}\label{sec:alter_alg}
A rigorous analysis for overall computational complexity of \Cref{alg:bb} is very difficult, since the number of iterations it requires for convergence depends heavily on the channel realization (especially Step 5).  Yet, we empirically recognize  that its  computational complexity is significantly higher than \Cref{alg:alter}. On the other hand, although \Cref{alg:alter} obtains a local maximum for \eqref{eq:original_prob_2}, numerical results in \Cref{{sec:comp}} demonstrate that its performance is practically identical to \Cref{alg:bb}, which is proven to obtain the global maximum. We thus select \Cref{alg:alter} as the main method to solve \eqref{eq:original_prob_2}, whereas  \Cref{alg:bb} serves as the benchmark assuring that \eqref{eq:original_prob_2} is solved optimally. Accordingly, we  provide the high transmit power and computational complexity analysis only for \Cref{alg:alter}.

\subsection{High Transmit Power Regime Analysis}\label{sec:high_tr}
When \Cref{alg:alter} is initialized at high transmit power regime $\gamma\to\infty$ with $K^{(0)}=M$, the channel gains of the approximate system in  \eqref{eq:lower_bound_m_2} are computed as   $\lambda_1(u)\approx\lambda_2(u)\approx \lambda_{M}(u)=\lambda(u)$, since  $c_1\approx c_2\approx\ldots c_{M}\approx c=\frac{1}{\xi_Q^2}-1$.  In this case, the power allocation at Step 4  yields as   $\cvec{w}\approx \cvec{1}_M$ in every  iteration. Due to \eqref{eq:lower_bound_m_3}, this outcome implies that  \eqref{eq:high_transmit_power} holds with $K=M$.  In conclusion, the assumption that \eqref{eq:high_transmit_power} holds and $K$ is large, on which the approximate parallel SISO system in \eqref{eq:lower_bound} is based on, is accurate for systems with large $M$ at $\gamma\to\infty$, when \Cref{alg:alter} is employed. This is why we refer to it as high transmit power assumption. Also for \Cref{alg:bb}, which is the other algorithm that is based on the assumption that \eqref{eq:high_transmit_power} holds and $K$ is large, it is observed that every power allocation vector  $\cvec{w}^\star(u_\text{U},K_\text{L})$ that is computed as a candidate for the global maximizer is approximately  uniform when $\gamma\to\infty$. Hence, the high transmit power assumption is valid also for \Cref{alg:bb} at $\gamma\to\infty$. Note that due to Step 3 of \Cref{alg:bb} and Step 5 of \Cref{alg:alter}, the resulting $\cvec{w}^\star$ is not allowed to contain zeros, which improves the accuracy of the assumption in  \eqref{eq:high_transmit_power}.

The following theorem identifies the relationship between total number  of users $M$  and the obtained regularization parameter $u$, when $\gamma\to\infty$. 
\begin{theorem}\label{theorem:beta_u}
Consider two separate \ac{CE} quantized systems where a transmitter with $N$ antennas serves $M_1$ and $M_2$  users at $\gamma\to\infty$. Assume $N\geq M_2>M_1\geq 2$. For $u_1$ and $u_2$, which are  values of $u$ obtained by  \Cref{alg:alter} for  systems with $M_1$  and $M_2$ antennas, respectively, it holds that  $u_1>u_2$.
\end{theorem}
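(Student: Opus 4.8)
The plan is to reduce the high‑power behaviour of \Cref{alg:alter} to a single scalar fixed‑point equation, and then to compare the fixed points for $M_1$ and $M_2$ by an elementary sign argument. As noted at the start of \Cref{sec:high_tr}, when $\gamma\to\infty$ all constants collapse to a common $c=\frac{1}{\xi_Q^2}-1>0$, so for every $(u,K)$ the gains in \eqref{eq:lower_bound_m_2} are all equal to $\lambda(u)=\frac{1/\beta-u^2}{\tau(1-u)^2+c}$. I would then observe that the mercury/waterfilling step \eqref{eq:solution_1}--\eqref{eq:solution_2} with equal gains returns the uniform allocation $\cvec{w}=\cvec{1}_M$, that Step~5 therefore never drops a user so $K\equiv M$, and that Step~6 reduces \eqref{eq:fixed_w} to maximizing $M\,I(\lambda(u))$ over $u\in[0,1]$. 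Since $\mathrm{MMSE}(\cdot)>0$, the stationarity condition \eqref{eq:deriv} reduces to $\lambda'(u)=0$; and since, after clearing its positive denominator (cf.\ \eqref{eq:snr_drv}), $\lambda'(u)$ has the same sign as the quadratic
\begin{equation*}
F(u;M):=\tau(1-u)(a-u)-cu,\qquad \tau=\tfrac{M-1}{M},\quad a=\tfrac{N}{M},
\end{equation*}
the output of the algorithm is a root of $F(\cdot;M)$. Because $F(\cdot;M)$ opens upward with $F(0;M)=\tau a>0$ and $F(1;M)=-c<0$ (see \eqref{eq:u=0} and \eqref{eq:u=1}), this root is unique in $(0,1)$, and there $F$ — hence $\lambda'$ — changes sign from $+$ to $-$, so it is the global maximizer of $\lambda$ on $[0,1]$ and equals the value returned by \Cref{alg:alter}: $u_i=u^\star(M_i)$.

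It then suffices to show $u^\star$ is strictly decreasing in $M$ for integers $M\ge 2$. Rather than use the implicit function theorem — where the two effects $a\downarrow$ and $\tau\uparrow$ would have to be balanced — I would compare $F(\cdot;M_1)$ and $F(\cdot;M_2)$ directly. The term $-cu$ cancels, so for $u\in[0,1)$
\begin{equation*}
F(u;M_1)-F(u;M_2)=(1-u)\Big[(\tau_1 a_1-\tau_2 a_2)-u(\tau_1-\tau_2)\Big].
\end{equation*}
Here $\tau_i a_i=\tfrac{N(M_i-1)}{M_i^2}$, and the elementary factorization $\tau_1 a_1-\tau_2 a_2=\tfrac{N(M_2-M_1)\big[(M_1-1)(M_2-1)-1\big]}{M_1^2 M_2^2}$ is strictly positive because $M_1\ge 2$ and $M_2>M_1$ force $(M_1-1)(M_2-1)\ge 2$; meanwhile $\tau_1-\tau_2=\tfrac{1}{M_2}-\tfrac{1}{M_1}<0$, so $-u(\tau_1-\tau_2)\ge 0$ on $[0,1)$. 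Hence $F(u;M_1)>F(u;M_2)$ for every $u\in(0,1)$.

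To finish, I would evaluate this at $u=u_2=u^\star(M_2)$, obtaining $F(u_2;M_1)>F(u_2;M_2)=0$. Since $F(\cdot;M_1)$ is an upward parabola with $F(0;M_1)>0>F(1;M_1)$, it is positive on $[0,u_1)$ and negative on $(u_1,1]$; as $u_2\in(0,1)$ with $F(u_2;M_1)>0$, this forces $u_2<u_1$, i.e.\ $u_1>u_2$, which is the claim.

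I expect the only genuinely delicate step to be the reduction in the first paragraph: one must justify that once $c_k\equiv c$ the alternating iteration is \emph{stationary} at the configuration $(\cvec{w},K)=(\cvec{1}_M,M)$ — so that neither Step~4 nor Step~5 perturbs it — and hence that the algorithm returns precisely the root of $F(\cdot;M)$ in $(0,1)$ rather than some other stationary point of $\sum_k I(\lambda_k(u)w_k)$. Everything afterwards is the elementary algebra sketched above, and the hypothesis $M_1\ge 2$ is used only through the inequality $(M_1-1)(M_2-1)-1>0$.
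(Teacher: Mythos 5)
Your proposal is correct, and its skeleton coincides with the paper's: both reduce the high-power behaviour of \Cref{alg:alter} to maximizing the common SNR \eqref{eq:u_update} over $u\in[0,1]$, both observe that the sign of the derivative is governed by the upward-opening quadratic (your $F(u;M)$ is exactly half of the paper's $-2c_ku+2\tau(1-u)(\tfrac{1}{\beta}-u)$ in \eqref{eq:snr_drv}), and both use the endpoint signs \eqref{eq:u=0}, \eqref{eq:u=1} to get a unique maximizer in $(0,1)$ with a $+$-to-$-$ sign change. Where you genuinely diverge is the comparison step. The paper treats $M$ as a continuous variable, claims the mixed partial $\frac{\partial^2\text{SNR}(u,M)}{\partial u\,\partial M}<0$ in \eqref{eq:second_diff} by arguing the numerator of \eqref{eq:snr_drv} decreases and its denominator increases in $M$, and then evaluates the $M_2$-derivative at $u_1$ as in \eqref{eq:92}--\eqref{eq:root_rule_2}. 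You instead compare the two sign-determining quadratics directly: the $-cu$ terms cancel, the difference factors as $(1-u)\bigl[(\tau_1a_1-\tau_2a_2)-u(\tau_1-\tau_2)\bigr]$, and the explicit factorization $\tau_1a_1-\tau_2a_2=\tfrac{N(M_2-M_1)[(M_1-1)(M_2-1)-1]}{M_1^2M_2^2}$ makes the positivity, and the precise role of the hypothesis $M_1\ge 2$, completely transparent; evaluating at $u_2$ then finishes symmetrically to the paper's evaluation at $u_1$. Your route buys a purely algebraic, discrete comparison that avoids differentiating in $M$ (and incidentally sidesteps the slight imprecision in the paper's footnote, where the strict negativity of the numerator's $M$-derivative degenerates at $u=0$, $M=2$); the paper's route is shorter to state given that \eqref{eq:snr_drv} is already on the page. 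The ``delicate step'' you flag --- that the iteration is stationary at $(\cvec{w},K)=(\cvec{1}_M,M)$ once all $c_k$ collapse to $c$ --- is treated identically in the paper: it is asserted in the discussion at the start of \Cref{sec:high_tr} and simply invoked at the opening of the proof, so you are not missing anything the paper supplies.
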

\begin{proof}
Previously in this subsection, it is shown that the power allocation vector is computed as $\cvec{w}=\cvec{1}_M$, when \Cref{alg:alter} is employed at $\gamma\to\infty$.  The problem of obtaining optimal $u$ at Step 6 via solving \eqref{eq:fixed_w} with the fixed $\cvec{w}=\cvec{1}_M$ boils down to problem of maximizing common SNR of all users, i.e., to the following problem 
 \begin{equation}\label{eq:u_update}
u^{\star}=\underset{u\in [0,1]}{\textrm{argmax}}\;\frac{\frac{1}{\beta(M)}-u^2}{\tau(M)\,(1-u)^2+c}.
\end{equation} 
Recall that $\beta(M)=\frac{M}{N}$ and $\tau(M)=\frac{M-1}{M}$. Note that we can compute  $\frac{\partial\text{SNR}(u,M)}{\partial u}$ as a specific version of \eqref{eq:snr_drv} with $w_k=1$ and $c_k=c>0$. As it is shown in  \Cref{{sec:alt_opt}},  the sign of the convex quadratic expression $-2 c u+ 2 \tau(M)\,(1-u)(\frac{1}{\beta(M)}-u)$ is simply the sign of $\frac{\partial\text{SNR}(u,M)}{\partial u}$.  Equation \eqref{eq:u=1} shows that there are  two roots  for  $\frac{\partial \text{SNR}(u,M)}{\partial u}$  and   $u=1$  is located between these roots. Furthermore, equation \eqref{eq:u=0} guarantees that the root smaller than $1$ is located in interval $[0,1]$ and  
it is the only maximizer  in $[0,1]$. Here, we can conclude that  there exists a single local maximizer in $[0,1]$. For a system with given $M_1$, let us denote the maximizer as  $u_1$.  

The derivative of $\frac{\partial \text{SNR}(u,M)}{\partial u}$ w.r.t. $M$ leads to 
\begin{equation}\label{eq:second_diff}
\frac{\partial^2 \text{SNR}(u,M)}{\partial u \partial M}= \frac{\partial (\frac{-2 c u+ 2 \tau(M)(1-u)(\frac{1}{\beta(M)}-u)}{(\tau(M)(1-u)^2+c)^2})} {\partial M}<0.
\end{equation}
Inequality \eqref{eq:second_diff} can be confirmed by inspecting that the numerator is decreasing  and the denominator is increasing in $M$\footnote{In the denominator the only quantity that depends on $M$ is $\tau(M)=\frac{M-1}{M}$, which is increasing in $M$. The derivative of the numerator with respect to $M$ reads as $2(1-u)\frac{N}{M^3}(2-M)-\frac{u}{M^2}$ which negative for $M\geq 2$}. As a result, it is implied that $\frac{\partial \text{SNR}(u,M)}{\partial u}$ is strictly decreasing in $M$. We can write that 
\begin{equation}\label{eq:92}
\frac{\partial \text{SNR}(u,M_2)}{\partial u}\bigg|_{u=u_1}<\frac{\partial \text{SNR}(u,M_1)}{\partial u}\bigg|_{u=u_1}=0
\end{equation}
for  $M_2>M_1$.  We already know that for the system with of $M_2$, there exists the maximizer $u_2$ in  $[0,1]$ for which we can form the following inequality
\begin{equation}\label{eq:root_rule_2}
\frac{\partial \text{SNR}(u,M_2)}{\partial u}\bigg|_{u=u_1}<\frac{\partial \text{SNR}(u,M_2)}{\partial u}\bigg|_{u=u_2}=0.
\end{equation} 

Since $u_2$ is the only root of  $\frac{\partial I(\text{SNR}(u,M_2)}{\partial u}$  in interval $[0,1]$  and the sign change at $u=u_2$ is from plus to minus, it  holds that $u_1>u_2$. This concludes the proof of \Cref{theorem:beta_u}. 
\end{proof}
\Cref{theorem:beta_u} suggests that at high transmit power regime, \ac{ZF}  precoding becomes more optimal as number of users $M$ decreases down to 2. We can extend \Cref{theorem:beta_u} to the infinite resolution scenario, where $c=0$. In this case, $u=1$ is the left root of the convex quadratic expression  $2 \tau(M)\,(1-u)(\frac{1}{\beta(M)}-u)$ that determines the sign of $\frac{\partial\text{SNR}(u,M)}{\partial u}$ (The other root satisfies $\frac{1}{\beta}$ is greater than 1, since $\beta\leq 1$). Thus, in infinite resolution systems with $\beta\leq 1$ the maximum in interval $[0,1]$ is obtained at $u=1$, which is the solution that is returned by \Cref{alg:alter} at $\gamma\to\infty$. This means that the difference between $u$ values obtained by  \Cref{alg:alter} in quantized and infinite resolution systems increase as number of users increase. Moreover, note that our outcome is in parallel with the finding in \cite{XuXuGo19}, which considers quantized systems with no power allocation.  Therein the optimal regularization parameter  $\rho$ is revealed to be proportional to the user load $\beta$ at high transmit power regime.
 \subsection{Computational Complexity of RZF precoding with \Cref{alg:alter}}\label{sec:complexity}
Computational burden of linear precoding consists of obtaining the precoding matrix once at every coherence time and linear transformation of $\cvec{s}$ to $\cvec{x}$ at every transmission. In our case, we obtain the solution for the approximate sum rate maximization in  \eqref{eq:original_prob_2} as $u^\star$, $\cvec{w}^\star$ and $K^\star$ by \Cref{alg:alter}, then we map them to $\bm{E}_K$ and $\bm{P}_K$ as it is described in  \Cref{sec:map_back}. 
Before employing \Cref{alg:alter},  we compute all large-scale fading coefficients as $\sigma_m=\frac{{\|\cvec{h}_m\|}^2}{N}$ and  we compute $(c_1,c_2, \ldots c_M)$ by \eqref{eq:c_k}. 
These operation cost $2MN$  and $3M$ \acp{FLOP}, respectively. 

The next part that requires \acp{FLOP} is the while loop of \Cref{alg:alter}.
Let us denote the number of iterations that the while loop takes to converge as $I$. Mercury/waterfilling and bisection search are two main operations that are executed at every iteration  to solve  \eqref{eq:fixed_u}  and  \eqref{eq:fixed_w}, respectively. In the first iteration of the while loop,  mercury/waterfilling algorithm is  applied with $M$ active users. In most cases, the number of active users does not change after the first iteration so that for $I-1$ iterations mercury/waterfilling algorithm is  applied with $K^{(I)}$ active users, where $K^{(I)}$ is the number of active users that \Cref{alg:alter} obtains as solution. In application  of mercury/waterfilling, \eqref{eq:solution_2} is solved by employing the bisection search and a look-up table. Here, we neglect the computational complexity of accessing the memory where the look-up tables are stored. Each iteration of bisection search to solve \eqref{eq:solution_2} with $K$ users takes $2K$ multiplication and $K-1$ additions. Hence, an iteration of bisection search to solve \eqref{eq:solution_2} with $K$ users costs $\mathcal{O}(3K)$.
We empirically observe that equation \eqref{eq:solution_2} is satisfied by error of $10^{-6}$ at approximately $i_\text{mercury}\approx 22$nd iteration of the bisection search, regardless of value of $K$. We calculate the total computational complexity of solving \eqref{eq:solution_2} in \Cref{alg:alter} as   $i_{\text{mercury}}(3M+3K^{(I)}(I-1))$ \acp{FLOP}.

We solve \eqref{eq:fixed_w} by finding  the regularization parameter $u$  satisfying \eqref{eq:deriv} with bisection search  in interval $[0,1]$. The solution of \eqref{eq:deriv} can be found by the error of $10^{-6}$ in $i_\text{bisection}=20$ iterations.  Each iteration of bisection search consists of computing \eqref{eq:der_decom} by using the MMSE look-up tables that we also used to solve  \eqref{eq:solution_2}. For $K$ active users,  computation of the numerator and the  denominator in \eqref{eq:snr_drv} costs  $2K$ operations each. Computing the fraction in   \eqref{eq:snr_drv} and multiplication by $w_m$s cost $K$ FLOP each. Multiplication with MMSE values from \eqref{eq:mmse} costs another  $K$ operations and $K-1$ additions more are required to compute \eqref{eq:deriv}. As a result, one iteration of bisection search to solve  \eqref{eq:deriv} costs approximately $8K$ FLOPs. The total complexity of finding the $u$ satisfying \eqref{eq:deriv} is $i_\text{bisection}(8M+(I-1)K^{(I)})$.

Computing  matrix $\bm{P}_K=\bm{H}_{K}^{\op{H}}(\bm{H}_{K}\bm{H}^{\op{H}}_{K}+\alpha\bm{\Sigma}_{K})^{-1}$ is the operation that costs the highest number of FLOPS after completion of the while loop. Computing the term $\bm{H}_K\bm{H}_K^{\op{H}}$ consists of approximately $K^2N$ operations (taking into account that $\bm{H}_K\bm{H}_K^{\op{H}}$ is a Hermitian matrix).  Computing $(\bm{H}_{K}\bm{H}^{\op{H}}_{K}+\alpha\bm{\Sigma}_{K})$ hence costs $K^2N+2K$ FLOPs. If we assume that Cholesky decomposition, forward-substitution and back-substitution is employed to compute the inversion, then the inversion $(\bm{H}_{K}\bm{H}^{\op{H}}_{K}+\alpha\bm{\Sigma}_{K})^{-1}\bm{H}_{K}$  costs $\frac{K^3}{3}+2K^2N$ FLOPS. Also, scaling by the entries of $\bm{E}^{\frac{1}{2}}$ costs $KN$ FLOPs. As a result, the complexity of computing $\bm{P}_{K}\bm{E}_K^{\frac{1}{2}}$ is in total $\frac{K^3}{3}+3K^2N+KN+2K$ FLOPs. Computing $\bm{E}_K$ costs $K$ multiplications and $K$ square root operations. 

In most channel realizations, $I$ required for \Cref{alg:alter} to converge is 3. Thus, as number of active users $K$ and $N$ grow large, most of computational complexity is caused by the inversion operation and complexity due to computing the regularization parameter and the power factors becomes negligible. The complexity when $K\to\infty,M\to\infty$ and $N\to\infty$ is $\mathcal{O}(K^3+K^2N)$.

\section{Extension of Q-GPI-SEM to CE Quantization}\label{sec:ex_QA_GPI_SEM}
In \cite{AsXuNo23_2}, performance of \Cref{alg:alter} in  1-bit MIMO downlink is compared to the state-of-the-art Q-GPI-SEM from \cite{ChPaLe22}. Q-GPI-SEM cannot be applied directly in CE MIMO downlink with higher resolution, since it relies on the \ac{AQNM} which is valid if  the real and imaginary parts of the precoded signal are quantized separately. A nontrivial extension of Q-GPI-SEM is required. To this aim, we utilize the \ac{LCA}  model which has been first presented in  \cite{MeGhNo09} and enabled for use of CE quantization in \cite[Chapter 3]{HeTh}.

With few modifications, Fig. ~\ref{fig:downlink_quan} can be used to represent the system model with Q-GPI-SEM as well.  Unlike Fig. ~\ref{fig:downlink_quan}, Q-GPI-SEM precoding is done as $\cvec{x}=\bm{P}\cvec{s}$ such that the power allocation is not carried out via matrix $\bm{E}$ but it is included in precoding matrix $\bm{P}$. Also, a transmit power constraint is imposed on the precoded signal such that $\op{tr}(\bm{P}\bm{P}^{\op{H}})=\gamma$ and the input $\cvec{s}$ is assumed to be Gaussian. The steps after precoding remain the same as Fig. ~\ref{fig:downlink_quan}. Given these changes, let us first consider the following  quantization
\begin{equation}\label{eq:CE_quantization_opt}
\mathcal{Q}_{\text{LCA}}(x_n)=r_n\exp(\text{j}(\ceil{\frac{\angle{x_n}}{2\psi}}-\psi)), \forall n, 
\end{equation}  
where $r_n$ is the envelope of the quantized output and $\psi=\frac{\pi}{Q}$.
On the contrary to \eqref{eq:CE_quantization}, each element of $\cvec{x}$ is quantized with a different envelope $r_n$. In such quantization operation, one has a freedom to select $r_n$ as the value minimizing $\op{E}[|\mathcal{Q}_{\text{LCA}}(x_n)-x_n|^2]$. In case of  $x_n\sim\mathcal{CN}(0,\sigma_{x_n}^2)$, the $r_n$ value minimizing $\op{E}[|\mathcal{Q}_{\text{LCA}}(x_n)-x_n|^2]$ is given as $r_n=\xi_Q \sigma_{x_n}$ \cite{HeTh}.  As a result, when quantization in \eqref{eq:CE_quantization_opt} is applied to all elements of $\cvec{x}$ with corresponding minimizing $r_n$, the output can be decomposed by \ac{LCA} as follows \cite{HeTh}
\begin{equation}\label{eq:lca_init}
\mathcal{Q}_{\text{LCA}}(\cvec{x})= \xi_Q^2 \cvec{x}+\cvec{d}_{\text{LCA}},
\end{equation}
where $\cvec{x}$ and $\cvec{d}_{\text{LCA}}$ are uncorrelated.  The covariance of distortion $\cvec{d}_{\text{LCA}}$ is given as \cite{HeTh}
\begin{equation}\label{eq:dist_cov}
\bm{C}_{\cvec{d}_{\text{LCA}}\cvec{d}_{\text{LCA}}}\approx\xi_Q^2(1-\xi_Q^2)\text{diag}(\bm{P}\bm{P}^{\op{H}}). 
\end{equation} 
We can reformulate \eqref{eq:CE_quantization} in terms of $\mathcal{Q}_{\text{LCA}}(\cvec)$ such that 
\begin{equation}\label{eq:ce_quan_reform}
\mathcal{Q}(\cvec{x})=\frac{1}{\xi_Q}{\text{diag}(\bm{P}\bm{P}^{\op{H}})}^{-\frac{1}{2}}\mathcal{Q}_{\text{LCA}}(\cvec{x}).
\end{equation}
At this step, we again resort to the asymptotic approximation which revealed in \eqref{eq:15} that for RZF precoder $\text{diag}(\bm{C}_{\cvec{x}\cvec{x}})$ converges to a scaled identity matrix. We also here expect that for a reasonable choice of $\bm{P}$,  $\text{diag}(\bm{P}\bm{P}^{\op{H}})$ asymptotically converges to a scaled identity matrix. Due to the constraint $\op{tr}(\bm{P}\bm{P}^{\op{H}})=\gamma$, we approximate $\text{diag}(\bm{P}\bm{P}^{\op{H}})$ as $\frac{\gamma}{N}\eye_N$. By employing  \eqref{eq:ce_quan_reform} and \eqref{eq:lca_init}, we can approximate the received signal at the $m$th user
\begin{equation}\label{eq:y_decom}
\begin{aligned}
y_m\approx\xi_Q \cvec{h}^{\op{H}}_m\cvec{p}_m s_m+ \xi_Q\sum_{j\neq m} \cvec{h}_m^{\op{H}}\cvec{p}_j s_j+ \frac{1}{\xi_Q}\cvec{h}_m^{\op{H}}\cvec{d}_{\text{LCA}}+\eta_m.
\end{aligned}
\end{equation} 
The received quantization distortion power  is approximated as 
\begin{equation}\label{eq:dist_pow_aqnm}
\frac{1}{\xi_Q^2}\cvec{h}_m^{\op{H}}\bm{C}_{\cvec{d}_{\text{LCA}}\cvec{d}_{\text{LCA}}}\cvec{h}_m =(1-\xi_Q^2)\sum_{j=1}^M \cvec{p}_j^{\op{H}} \text{diag}(\cvec{h}_m\cvec{h}_m^{\op{H}})\cvec{p}_j,
\end{equation}
by employing \eqref{eq:dist_cov} and some algebraic manipulations.
As we assume Gaussian inputs, \eqref{eq:y_decom} and \eqref{eq:dist_pow_aqnm} can  be combined to approximate the rate of the $m$th user as 
\begin{equation}\label{eq:rate_exp}
R_m\approx\log_2(\frac{\bar{\cvec{p}}^{\op{H}}\bm{C}_m\bar{\cvec{p}}}{\bar{\cvec{p}}^{\op{H}}\bm{D}_m\bar{\cvec{p}}}), 
\end{equation}
where
\begin{equation}\label{eq:matrices}
\begin{aligned}
&\bm{C}_m=\text{blkdiag}(\bm{G}_m,\bm{G}_m\ldots\bm{G}_m)+\frac{1}{\gamma}\eye_{MN},\\
&\bm{G}_m=\xi_Q^2 \cvec{h}_m\cvec{h}_m^{\op{H}}+ (1-\xi_Q^2)\text{diag}(\cvec{h}_m\cvec{h}_m^{\op{H}}),\\
&\bm{D}_m=\bm{C}_m-\text{blkdiag}(\bm{0}_N,\ldots, \xi_Q^2 \cvec{h}_m\cvec{h}_m^{\op{H},}\ldots \bm{0}_N)
\end{aligned}
\end{equation}
and $\bar{\cvec{p}}$ is the unit norm vector that is obtained by  stacking and normalizing $\bm{P}$, i.e., $\bar{\cvec{p}}=\frac{\text{vec}(\bm{P})}{\sqrt{\gamma}}$. 
In \cite{ChPaLe22}, Q-GPI-SEM is offered as an algorithm that obtains the unit norm solution to the rate maximization problem where rate expressions are in form of \eqref{eq:rate_exp}. We can apply Q-GPI-SEM from \cite{ChPaLe22} with $\bm{C}_m,\bm{D}_m$  in \eqref{eq:matrices} for $Q$-level CE quantization  (see Section V in  \cite{ChPaLe22} for the details). The unit norm stacked vector obtained by Q-GPI-SEM is scaled by $\sqrt{\gamma}$  and reshaped as separate columns in  $\bm{P}\in\mathbb{C}^{N\times M}$ obeying $\op{tr}(\bm{P}\bm{P}^{\op{H}})=\gamma$.  

Q-GPI-SEM performs power iterations, where the operation with highest computational cost is inversion of a $N\times N$ matrix. At each power iteration, matrix inversion is performed $M$ times so that one power iteration's computational complexity is $\mathcal{O} (MN^3)$. For that reason, when the number of active users  $K<<N$, \Cref{alg:alter} has significantly lower computational complexity than Q-GPI-SEM. Furthermore, linear transformation of $\cvec{s}$ to $\cvec{x}$ costs $2MN$ FLOPs in Q-GPI-SEM , whereas  the precoding is done by $\cvec{x}=\bm{P}_k\bm{E}_K\cvec{s}_K$ in \Cref{alg:alter}, which costs $2KN$ FLOPs. If not all users are active , i.e., $K<M$, then we save computational complexity with \cref{alg:alter}, which gets larger if the bandwidth is increased.
\section{Numerical Results}\label{sec:num_res}
In this section, we provide numerical results for the proposed \ac{RZF} precoding techniques. We select empirical generalized mutual information (GMI) as the  performance metric illustrating  the user rates (for the details on how to compute empirical GMI the reader is referred to \cite{NeSt18}). Average rate over all $M$ users within the cell is plotted versus transmit SNR $\gamma_{\text{dB}}$. Curves for the average rates are obtained by averaging over 1000 realizations of the  Rayleigh fading channel model. At each channel realization, the large-scale fading coefficient of the $m$th user is determined from the path loss (PL) as $\sigma_{m}=\frac{1}{\text{PL}_m}$ and  the PL of the $m$th user is generated as 
\begin{equation}\label{eq:pl_vector_dB}
\text{PL}_{m} \text{ (in  dB)}=a+10 b\log_{10}(z_m)+{\zeta},
\end{equation}
where $z_m$ and  $\zeta\sim\mathcal{N}(0,\sigma_\zeta^2)$ are the distance from the transmitter  and the shadowing factor of the $m$th user, respectively.  Parameters $a,b,\sigma_\zeta^2$  are  given  as $61.4, 3.4, 9.7$, respectively, which correspond to the non-line-of-sight (NLOS) channel measurements at 28 GHz \cite{Samimi16}.   The single-antenna users are distributed  in a ring-like area with inner radius of 35 m and outer radius of 200 m with uniform probability.
Finally, we assume equiprobable distribution for the scenarios with finite input constellations.
\subsection{Comparison of \Cref{alg:bb} and \Cref{alg:alter}}\label{sec:comp}
In Fig.~\ref{fig:bbvsbisect}, we see the average rates with \ac{RZF} precoding  obtained by \Cref{alg:bb} and \Cref{alg:alter} in CE MIMO downlink with $N=64$ antennas. Solid and  dashed curves depict the average rates achieved by \Cref{alg:bb} in systems with $M=8$ and $M=32$ users, respectively. Colors of the curves are selected according to the combination input constellation and number of CE quantization levels $Q$. The diamond marks are obtained by applying \Cref{alg:alter} under the same settings as the solid or dashed curves they overlap with. The  performance gap between \Cref{alg:bb} and  \Cref{alg:alter} is negligible in all depicted settings and  we empirically confirm that \Cref{alg:bb} has significantly higher computational complexity \footnote{For example, in CE MIMO downlink with $M=8$, $N=64$, $Q=4$ and QPSK inputs, \Cref{alg:bb} takes on the average 8 times to converge in comparison to \Cref{alg:alter}.}. For that reason, quantization-aware \ac{RZF} (QA-RZF) curves are generated with \Cref{alg:alter} for the rest of the numerical results.   
\begin{figure}[!]
\centering
\input{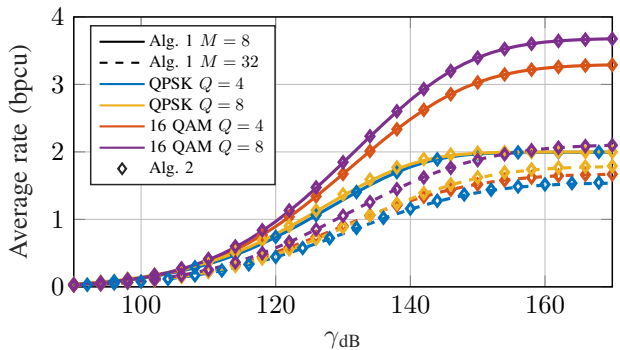}
\caption{Average rates of users in  CE quantized MIMO downlink with $N=64$. $M=8$ for solid curves, $M=32$ for dashed curves. Different colors represent different input signal constellation and quantization. The diamond marks are obtained by \Cref{alg:alter}.} 
\label{fig:bbvsbisect} 
\end{figure}
\subsection{Benefits of Quantization Awareness} 

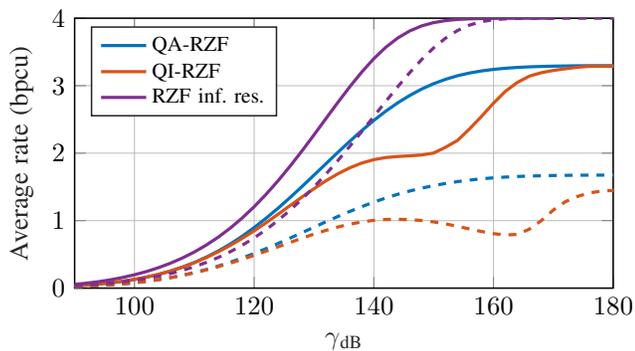
\begin{figure}[!]
%
%
\definecolor{mycolor1}{rgb}{0.00000,0.44700,0.74100}%
\definecolor{mycolor2}{rgb}{0.00000,0.44706,0.74118}%
\definecolor{mycolor3}{rgb}{0.85098,0.32549,0.09804}%
\definecolor{mycolor4}{rgb}{0.49400,0.18400,0.55600}%
\begin{tikzpicture}

\begin{axis}[%
width=0.951\fwidth,
height=0.477\fwidth,
at={(0\fwidth,0\fwidth)},
scale only axis,
xmin=90,
xmax=180,
xlabel style={font=\color{white!15!black}},
xlabel={$\gamma{}_{\text{dB}}$},
ymin=0,
ymax=4,
ylabel style={font=\color{white!15!black}},
ylabel={Average rate   (bpcu)},
axis background/.style={fill=white},
title style={font=\bfseries},
title={},
xmajorgrids,
ymajorgrids,
legend style={nodes={scale=0.8},at={(0.03,0.97)}, anchor=north west, legend cell align=left, align=left, draw=white!15!black}
]
\addplot [color=mycolor1, line width=1.2pt]
  table[row sep=crcr]{%
90	0.0341418492547789\\
92	0.0453449198043537\\
94	0.0597184489325266\\
96	0.077883283200423\\
98	0.100694726603834\\
100	0.128247528723404\\
102	0.16188586897149\\
104	0.202532580753485\\
106	0.251304757421788\\
108	0.3100762997751\\
110	0.378955965282325\\
112	0.459119928285535\\
114	0.549893498385198\\
116	0.652911729067154\\
118	0.768736759201884\\
120	0.896964906819395\\
122	1.03630265178006\\
124	1.18415901087753\\
126	1.34156531085247\\
128	1.50564513363647\\
130	1.67411600160874\\
132	1.84353421146605\\
134	2.01252324977687\\
136	2.17744374966626\\
138	2.33538799987307\\
140	2.48470591821649\\
142	2.62247069837842\\
144	2.74598491154772\\
146	2.85497382506171\\
148	2.95033257544991\\
150	3.02956704834363\\
152	3.09490278791486\\
154	3.14787075366587\\
156	3.18797897558545\\
158	3.21872361752159\\
160	3.24145648269399\\
162	3.25868958455447\\
164	3.27090360250925\\
166	3.27948438969574\\
168	3.28544556892406\\
170	3.28954604306512\\
172	3.29251972954515\\
174	3.29502106482179\\
176	3.29615792007371\\
178	3.29642285037031\\
180	3.2967834768626\\
182	3.29768450797459\\
184	3.29720823290197\\
186	3.297455975493\\
188	3.29823072858678\\
190	3.29812549721018\\
};
\addlegendentry{QA-RZF}

\addplot [color=mycolor3, line width=1.2pt]
  table[row sep=crcr]{%
90	0.0342587717268177\\
92	0.0459642936332563\\
94	0.0606588304800265\\
96	0.0791237417186657\\
98	0.102126983543171\\
100	0.130013189106435\\
102	0.164207232247147\\
104	0.204623210249987\\
106	0.253610711865468\\
108	0.310977721607234\\
110	0.376861565206465\\
112	0.452140319546964\\
114	0.537792582626743\\
116	0.63256027728622\\
118	0.736658105389035\\
120	0.849106899573661\\
122	0.969158249482607\\
124	1.09474545522357\\
126	1.22312626719952\\
128	1.35000042665749\\
130	1.47346938480309\\
132	1.59065219497424\\
134	1.69440121719554\\
136	1.7835928729554\\
138	1.85391672943508\\
140	1.9039710324822\\
142	1.93514230312575\\
144	1.95074706557126\\
146	1.96027239418449\\
148	1.9734041862543\\
150	2.00181333341952\\
152	2.08770843031995\\
154	2.18962260911933\\
156	2.35975607611245\\
158	2.54453926813924\\
160	2.74063312210086\\
162	2.91756297569118\\
164	3.04886471553092\\
166	3.14165817441444\\
168	3.19245171804012\\
170	3.22595583786166\\
172	3.25057601733098\\
174	3.2727719601658\\
176	3.28716063301188\\
178	3.28730858378642\\
180	3.28849239351502\\
182	3.28840234780412\\
184	3.28847241810785\\
186	3.29111643753155\\
188	3.29074546129846\\
190	3.29111954606312\\
};
\addlegendentry{QI-RZF}

\addplot [color=mycolor4, line width=1.2pt]
  table[row sep=crcr]{%
90	0.0551889685464671\\
92	0.0731497111752915\\
94	0.0955628546277978\\
96	0.123283833801368\\
98	0.157452685585472\\
100	0.197753766501579\\
102	0.245735164463826\\
104	0.302262592404038\\
106	0.368698832727711\\
108	0.44552839031054\\
110	0.534799175171107\\
112	0.637716662964701\\
114	0.755082120692045\\
116	0.88843488846713\\
118	1.0382836527314\\
120	1.20595487313984\\
122	1.39057731714251\\
124	1.59133356810894\\
126	1.80649077335292\\
128	2.03359747439612\\
130	2.27061373068445\\
132	2.51260992677497\\
134	2.75300495423621\\
136	2.98596036617002\\
138	3.20365065951474\\
140	3.40047025542977\\
142	3.56939289636188\\
144	3.70646253587387\\
146	3.81116475991268\\
148	3.88595563815001\\
150	3.9347653125381\\
152	3.96446390742295\\
154	3.98147732468733\\
156	3.99029240055429\\
158	3.99497989180948\\
160	3.99728031686701\\
162	3.99858964553066\\
164	3.99931780613002\\
166	3.99958435573672\\
168	3.999759798531\\
170	3.99983988427458\\
};
\addlegendentry{RZF inf. res.}

\addplot [color=mycolor2, dashed, line width=1.2pt]
  table[row sep=crcr]{%
90	0.021549060890055\\
92	0.0284396289332202\\
94	0.0369876481961939\\
96	0.0475306033142686\\
98	0.0614089757992916\\
100	0.0784651500876793\\
102	0.0991521343500694\\
104	0.124697458011681\\
106	0.1547690437937\\
108	0.189759329032042\\
110	0.229870745831315\\
112	0.275622274803278\\
114	0.326796550946628\\
116	0.383572350968459\\
118	0.445911279476006\\
120	0.513540033614516\\
122	0.585808841244352\\
124	0.662006360022112\\
126	0.740989645963296\\
128	0.821696561179494\\
130	0.902748382528876\\
132	0.983584927730271\\
134	1.06175085864895\\
136	1.13698248768885\\
138	1.20810090994287\\
140	1.27453125713604\\
142	1.33560261229743\\
144	1.3905016332965\\
146	1.4390548343356\\
148	1.48194520578175\\
150	1.51921523816806\\
152	1.55082253503047\\
154	1.57839058442049\\
156	1.60121360471177\\
158	1.61981244835986\\
160	1.63443067234286\\
162	1.64552103405895\\
164	1.6542055825373\\
166	1.66068563270027\\
168	1.66518796590193\\
170	1.6686380269783\\
172	1.67118175564077\\
174	1.67293268810518\\
176	1.67405014844312\\
178	1.67456991254401\\
180	1.67503775250432\\
182	1.6755422013093\\
184	1.67577642745973\\
186	1.6758510012245\\
188	1.67577042729889\\
190	1.67593266359613\\
};

\addplot [color=mycolor3, dashed, line width=1.2pt]
  table[row sep=crcr]{%
90	0.0219281747991767\\
92	0.0289549662384033\\
94	0.0377328192858572\\
96	0.0491412402692412\\
98	0.0632034329122297\\
100	0.080342473493433\\
102	0.101541970063765\\
104	0.126231262046146\\
106	0.155394149292426\\
108	0.188837206791227\\
110	0.226828102880906\\
112	0.269829159935996\\
114	0.317386079971411\\
116	0.369415221723039\\
118	0.425718289845575\\
120	0.485563106829154\\
122	0.547762783407466\\
124	0.611864151362539\\
126	0.676234691494491\\
128	0.73998699293239\\
130	0.800409859546394\\
132	0.855813629563464\\
134	0.905508214906432\\
136	0.947781604452415\\
138	0.981101565160332\\
140	1.0043070422938\\
142	1.01714097014203\\
144	1.02015456974013\\
146	1.01447906230718\\
148	1.00250374932762\\
150	0.9813854193234\\
152	0.95190500804391\\
154	0.915427997127021\\
156	0.876172898326147\\
158	0.839348941106642\\
160	0.807854228719632\\
162	0.787050213088209\\
164	0.796409241800764\\
166	0.874006726308076\\
168	1.00000257331555\\
170	1.15719117608093\\
172	1.28001236043895\\
174	1.35719103304391\\
176	1.41130614728331\\
178	1.4366450270051\\
180	1.44832253185419\\
182	1.45281739891897\\
184	1.45522826286546\\
186	1.45505239414125\\
188	1.45522788429779\\
190	1.45494689115991\\
};

\addplot [color=mycolor4, dashed, line width=1.2pt]
  table[row sep=crcr]{%
90	0.0343370245222867\\
92	0.044685293540849\\
94	0.0574829508829734\\
96	0.0731322285305782\\
98	0.0921631813890758\\
100	0.115303648683831\\
102	0.14295187465902\\
104	0.176024244509193\\
106	0.215573164760172\\
108	0.262277771513855\\
110	0.316997921073645\\
112	0.380527732284877\\
114	0.453863295771166\\
116	0.538689715797331\\
118	0.634912607023399\\
120	0.744212988030708\\
122	0.866723101318504\\
124	1.00312986744162\\
126	1.15430556622298\\
128	1.31990843000335\\
130	1.49843566550812\\
132	1.68908793327421\\
134	1.89140107720239\\
136	2.10353329364587\\
138	2.32215742555901\\
140	2.54437914718735\\
142	2.76674826671189\\
144	2.9843894692167\\
146	3.19283952452629\\
148	3.38753946025374\\
150	3.56011040185669\\
152	3.70516506989238\\
154	3.81793312347534\\
156	3.89776425366079\\
158	3.94864755117191\\
160	3.97676661087364\\
162	3.99032201081767\\
164	3.99616197229522\\
166	3.99836933155171\\
168	3.99898894325047\\
170	3.99908948628845\\
172	3.99927670454447\\
174	3.99944355708734\\
176	3.99958705085187\\
178	3.99966482800587\\
180	3.99985079234049\\
182	3.99993346118618\\
184	3.99994133451005\\
186	3.99997837576693\\
188	3.99998775832101\\
190	3.99999843522431\\
};

\end{axis}
\end{tikzpicture}%
  \caption{Average rates of users in  MIMO downlink $N=64$, 16 QAM inputs and  $Q=4$. $M=8$ for solid curves, $M=32$ for dashed curves.}
   \label{fig:QAvsQI_Q}
\end{figure}

In this subsection, our proposed QA-RZF precoding is essentially compared to the quantization-ignorant RZF (QI-RZF) and a benchmark with no quantization. To obtain the QI-RZF curve, \Cref{alg:alter} is run with the assumption of no quantization -which can be done by setting $\xi_Q=1$ when computing $c_k$ values in \eqref{eq:c_k}- and then the precoded  signal goes through $Q$ level CE quantization.   
The benchmark  'RZF inf. res.' is  obtained by scaling the precoded signal of QI-RZF to have  power of $\gamma$ and then transmitting it with no quantization. Note that for QI-RZF, the approximate channel gains in \eqref{eq:lower_bound_m_2} grow very large at $\gamma\to\infty$. This leads users to almost achieve the natural rate limit, i.e., the maximum rate due to having finite constellation,  even with small values of $\cvec{w}$, for  which $\sum_{k=1}^K w_k<K$.  As a result, obtaining the exact solution of \eqref{eq:solution_1} requires an idealistic MMSE look-up table with infinite range. In such cases, the mercury/waterfilling implementation does not converge to satisfy  \eqref{eq:solution_1} and we manually set its solution to $\cvec{w}=\cvec{1}_M$. Otherwise, QI-RZF performs  even worse at $\gamma\to\infty$ in CE systems.

Fig.~\ref{fig:QAvsQI_Q}  illustrates  average rates of users in CE MIMO downlink with 16 QAM inputs, $N=64$ and $Q=4$. Solid curves depict the average rates for the system with $M=8$ users and  dashed curves are for $M=32$. The benefit of taking quantization into account is clearly observed when QA-RZF is compared to QI-RZF. For  $M=8$ and $M=32$, there are  gaps of approximately 5 dB and 6.8 dB around 1.8 and 1 bpcu, respectively. QI-RZF curve diverges from QA-RZF curve as transmit power increases. 

On the right side of Fig.~\ref{fig:QAvsQI_Q}, we see that the gap between QA-RZF and QI-RZF enlarges  when $M$ is increased  to $32$ from $8$, which can be interpreted  via the high transmit power analysis. Both QI-RZF and QA-RZF set $\cvec{w}=\cvec{1}_M$ at $\gamma\to\infty$ so that the gap is not due to the power allocation. On the other hand, from the discussion in \Cref{sec:high_tr} we know that the difference between $u$ values obtained for quantized and unquantized systems increases with number of users, which causes the gap between QA-RZF and QI-RZF to enlarge. 

Also, we observe that rates achieved by QI-RZF do not consistently increase with the transmit power in Fig.~\ref{fig:QAvsQI_Q}. This is caused  by the power allocation in QI-RZF at a particular region of high transmit power regime, where the mercury/waterfilling algorithm converges and $\cvec{w}=\cvec{1}_M$ is not yet set manually. As the approximate channel gains in \eqref{eq:lower_bound_m_2} grow large in  QI-RZF, the mercury/waterfilling converges to the power allocation that is inversely proportional to the channel gains \cite{LoTuVe05}. This power allocation is clearly far from the optimal uniform-like power allocation  of QA-RZF at high transmit power and causes the observed inconsistency.

\subsection{Performances with Different CE Quantization and Modulation Levels}
\begin{figure}[!]
\centering
\input{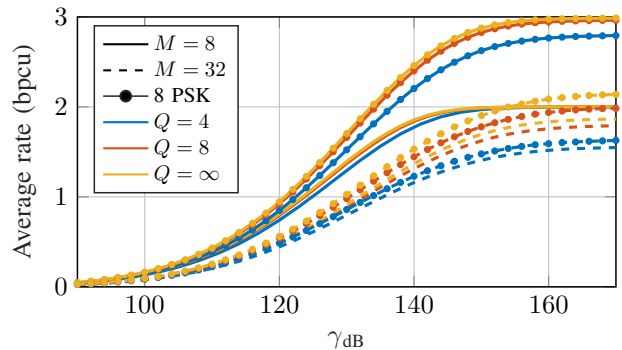}
\caption{Average rates of users in  CE quantized MIMO downlink with $N=64$. $M=8$ for solid curves, $M=32$ for dashed curves. Different colors represent different input  quantization levels. The curves with circle marks are for 8 PSK inputs.}
\label{fig:QPSK_8PSK_different_Q} 
\end{figure}
Fig.~\ref{fig:QPSK_8PSK_different_Q} depicts how QA-RZF performs with various quantization levels and input constellations.  The legend has a similar structure as the legend of Fig.~\ref{fig:bbvsbisect} with a difference that here  colors represent only the quantization levels and curves with circle marks are for 8 PSK inputs. Setting $Q=\infty$ corresponds to infinite resolution phase quantization with unit magnitude.  In Fig.~\ref{fig:QPSK_8PSK_different_Q} shift in $Q$ from $4$ to $8$ improves rates substantially, whereas increasing $Q$ above $8$ is not as effective especially at low user load or low/moderate transmit power. By comparing the cases with QPSK and 8 PSK inputs, we can see that increasing $Q$ makes a bigger difference in higher order modulation. 

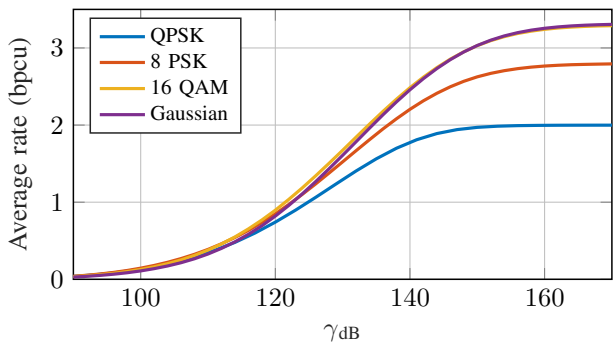
\begin{figure}[!]
\centering
%
%
\definecolor{mycolor1}{rgb}{0.00000,0.44700,0.74100}%
\definecolor{mycolor2}{rgb}{0.85098,0.32549,0.09804}%
\definecolor{mycolor3}{rgb}{0.92941,0.69412,0.12549}%
\definecolor{mycolor4}{rgb}{0.49412,0.18431,0.55686}%
\begin{tikzpicture}

\begin{axis}[%
width=0.951\fwidth,
height=0.477\fwidth,
at={(0\fwidth,0\fwidth)},
scale only axis,
xmin=90,
xmax=170,
xlabel style={font=\color{white!15!black}},
xlabel={$\gamma{}_{\text{dB}}$},
ymin=0,
ymax=3.5,
ylabel style={font=\color{white!15!black}},
ylabel={Average rate   (bpcu)},
axis background/.style={fill=white},
title style={font=\bfseries},
title={ },
xmajorgrids,
ymajorgrids,
legend style={nodes={scale=0.8},at={(0.03,0.97)}, anchor=north west, legend cell align=left, align=left, draw=white!15!black}
]
\addplot [color=mycolor1, line width=1.2pt]
  table[row sep=crcr]{%
90	0.0375541866463875\\
93	0.0569108684969414\\
96	0.0833604286677001\\
99	0.11869559240041\\
102	0.163459719243026\\
105	0.220279390642684\\
108	0.290361466625351\\
111	0.37557463759399\\
114	0.478711327546716\\
117	0.600886578615187\\
120	0.741332705894847\\
123	0.89656119366291\\
126	1.06312355462274\\
129	1.23507552703599\\
132	1.40485736449403\\
135	1.5624036256291\\
138	1.69969330021651\\
141	1.80980751856545\\
144	1.88988733430498\\
147	1.94121690296553\\
150	1.97033784510199\\
153	1.98538102229245\\
156	1.99265408184971\\
159	1.99595203299986\\
162	1.99762524532606\\
165	1.99830194928155\\
168	1.99863524642572\\
171	1.9987357004629\\
};
\addlegendentry{QPSK}

\addplot [color=mycolor2, line width=1.2pt]
  table[row sep=crcr]{%
90	0.0393989586016129\\
92	0.0525457861571136\\
94	0.0690665377762273\\
96	0.0897702838777983\\
98	0.114341696637308\\
100	0.144451314057746\\
102	0.179532701956107\\
104	0.220775621776723\\
106	0.268947172260118\\
108	0.324310819283459\\
110	0.38770330371248\\
112	0.460394883822486\\
114	0.542264480356286\\
116	0.633977062545009\\
118	0.736375598129664\\
120	0.84851295033385\\
122	0.97066133365708\\
124	1.10098466549553\\
126	1.23721701637344\\
128	1.37895022977529\\
130	1.52323810280417\\
132	1.6680998937728\\
134	1.81118142251264\\
136	1.9506422396105\\
138	2.08183727749697\\
140	2.20416292385959\\
142	2.31490981909697\\
144	2.41229331948571\\
146	2.49672814363893\\
148	2.56773319344836\\
150	2.62544964990692\\
152	2.6704054503439\\
154	2.70529476046074\\
156	2.73178364409301\\
158	2.75121474231354\\
160	2.76550933011956\\
162	2.77549604362937\\
164	2.78273636645059\\
166	2.78707214125483\\
168	2.79044746797191\\
170	2.79340140645074\\
};
\addlegendentry{8 PSK}

\addplot [color=mycolor3, line width=1.2pt]
  table[row sep=crcr]{%
90	0.0340769624141908\\
92	0.0454217649605582\\
94	0.0597061998866776\\
96	0.0778957476192191\\
98	0.100542032958161\\
100	0.128003103501104\\
102	0.161647571839195\\
104	0.202980739763776\\
106	0.251200773474752\\
108	0.310104916264591\\
110	0.378651926428303\\
112	0.458549236379994\\
114	0.549376144071695\\
116	0.653450288095623\\
118	0.768839856212199\\
120	0.896257044610509\\
122	1.03623395187417\\
124	1.1849410497301\\
126	1.3423859652312\\
128	1.50585922559538\\
130	1.67301872978455\\
132	1.84366012669114\\
134	2.0129427671423\\
136	2.17794768909656\\
138	2.33582561555701\\
140	2.4846258554642\\
142	2.62135927236905\\
144	2.74539263608107\\
146	2.85543666828673\\
148	2.94977697756646\\
150	3.02984568117547\\
152	3.09462600255275\\
154	3.14624643233279\\
156	3.18766544641095\\
158	3.21790201681313\\
160	3.24162458360675\\
162	3.25902603362789\\
164	3.27092612144446\\
166	3.27947458976603\\
168	3.28561936672675\\
170	3.28876596702958\\
};
\addlegendentry{16 QAM}

\addplot [color=mycolor4, line width=1.2pt]
  table[row sep=crcr]{%
90	0.0290828819484786\\
92	0.0387294776829866\\
94	0.0509676232853658\\
96	0.0661733208338362\\
98	0.0851816567769871\\
100	0.108722340655583\\
102	0.138057634701909\\
104	0.172929512996502\\
106	0.215147687479383\\
108	0.266494984043814\\
110	0.32790402677449\\
112	0.400957054409073\\
114	0.485973549700186\\
116	0.583172228505265\\
118	0.693697081806959\\
120	0.819503148330961\\
122	0.957694159225846\\
124	1.10774913092797\\
126	1.26730016050518\\
128	1.43505865250305\\
130	1.60788931647645\\
132	1.78403883287862\\
134	1.96071564645464\\
136	2.1337704018657\\
138	2.30059632365369\\
140	2.45756690261408\\
142	2.60219240453559\\
144	2.73291540926379\\
146	2.84854517601821\\
148	2.94840462198561\\
150	3.03221697023661\\
152	3.1008903485539\\
154	3.15583555778641\\
156	3.19851797151686\\
158	3.23096937688546\\
160	3.25510056315151\\
162	3.27300266948496\\
164	3.28591717144142\\
166	3.2950305055834\\
168	3.30138447473735\\
170	3.30575586518593\\
};
\addlegendentry{Gaussian}

\end{axis}
\end{tikzpicture}%
\caption{Average rates of users in  MIMO downlink $M=8,$ $N=64$ and  $Q=4$}
\label{fig:Different_mod_Q_4_M_8_N_64} 
\end{figure}

Fig~\ref{fig:Different_mod_Q_4_M_8_N_64} illustrates rates achieved by QA-RZF in the system with  $M=8$, $N=64$ and $Q=4$. Different curves stand for the systems with different input signal constellations. In  Fig~\ref{fig:Different_mod_Q_4_M_8_N_64},  the combination of two factors determines the performance of QA-RZF. The first factor is the positive effect of higher modulation order such that power allocation is more efficient compared to the low order modulated systems, since stronger users are away from the natural rate limit for a broader range of transmit power. For that reason, there is a tendency to achieve higher rates with higher order modulation. On the other hand, QA-RZF is based on an approximation obtained via \eqref{eq:high_transmit_power}. As modulation order increases \eqref{eq:high_transmit_power} gets more inaccurate at low transmit power. This is why 16 QAM  curve is outperformed by 8 PSK curve and curve for Gaussian inputs is outperformed by  8 PSK and 16 QAM curves at low transmit power. 
\subsection{Comparison to the State-of-the-Art}
Fig.~\ref{fig:Q_8_vs_sota} exhibits performance of  QA-RZF in comparison to the state-of-the art  method Q-GPI-SEM originally from \cite{ChPaLe22}, extended here in \Cref{sec:ex_QA_GPI_SEM}. Number of quantization levels is selected as $Q=8$. QA-RZF outperforms Q-GPI-SEM especially in systems with QPSK inputs. In systems with 16 QAM inputs, performances of QA-RZF and Q-GPI-SEM are very close. The comparison for Gaussian input has been shown in \cite{AsXuNo23_2}, where again QA-RZF and Q-GPI-SEM curves almost overlap. Also note that   results with $Q=4$ are essentially the same as  Fig.~\ref{fig:Q_8_vs_sota} and it is depicted in \cite{AsXuNo23_2} as well. QA-RZF outperforms Q-GPI-SEM in the systems with lower order modulation, since it can be easily adapted for every input constellation, whereas  Q-GPI-SEM is designed for Gaussian inputs . Finally, even though QA-RZF cannot outperform Q-GPI-SEM in higher order modulation, it has significantly less computational complexity as it is shown in \Cref{sec:complexity}.
\begin{figure}[!]
%
%
\definecolor{mycolor1}{rgb}{0.00000,0.44706,0.74118}%
\definecolor{mycolor2}{rgb}{0.85098,0.32549,0.09804}%
\definecolor{mycolor3}{rgb}{0.85000,0.32500,0.09800}%
\begin{tikzpicture}

\begin{axis}[%
width=0.951\fwidth,
height=0.477\fwidth,
at={(0\fwidth,0\fwidth)},
scale only axis,
xmin=90,
xmax=170,
xlabel style={font=\color{white!15!black}},
xlabel={$\gamma{}_{\text{dB}}$},
ymin=0,
ymax=4,
ylabel style={font=\color{white!15!black}},
ylabel={Average rate  (bpcu)},
axis background/.style={fill=white},
title style={font=\bfseries},
title={ },
xmajorgrids,
ymajorgrids,
legend style={nodes={scale=0.8},at={(0.03,0.97)}, anchor=north west, legend cell align=left, align=left, draw=white!15!black}
]
\addplot [color=mycolor1, line width=1.2pt]
  table[row sep=crcr]{%
90	0.0428946660209873\\
92	0.0561800308081239\\
94	0.0722475769573708\\
96	0.0922728460646854\\
98	0.116643762638051\\
100	0.14493140454578\\
102	0.17874355937029\\
104	0.217780234534097\\
106	0.261311832466541\\
108	0.313191785784807\\
110	0.3729747193191\\
112	0.438924856107691\\
114	0.513963423055993\\
116	0.599953581084039\\
118	0.692466562729416\\
120	0.793873030866076\\
122	0.902911536614188\\
124	1.01519930670796\\
126	1.13299168295181\\
128	1.25028821335015\\
130	1.37103021383946\\
132	1.4826235441037\\
134	1.58912431726225\\
136	1.68608291964134\\
138	1.7701807751341\\
140	1.83949889158281\\
142	1.89430573890054\\
144	1.93301820728743\\
146	1.95978121106241\\
148	1.97755309993624\\
150	1.98726439726313\\
152	1.99307791474034\\
154	1.99614606472834\\
156	1.99782838167185\\
158	1.99895514790206\\
160	1.99942897079568\\
162	1.99967970333308\\
164	1.99980431830181\\
166	1.99988691876853\\
168	1.99993700712328\\
170	1.99994091457227\\
};
\addlegendentry{QA-RZF}

\addplot [color=mycolor1, dashed, line width=1.2pt]
  table[row sep=crcr,forget plot]{%
90	0.0263456663517533\\
92	0.0340943450645524\\
94	0.0434403854836912\\
96	0.0549049099193652\\
98	0.0682631395500594\\
100	0.0848129526181047\\
102	0.103984865336481\\
104	0.12691634184389\\
106	0.153873137054677\\
108	0.185245721656881\\
110	0.221193614413959\\
112	0.262501494472191\\
114	0.308834253848905\\
116	0.361474597135349\\
118	0.420275354911869\\
120	0.483905111128147\\
122	0.555424974197664\\
124	0.630450322492834\\
126	0.712347575840124\\
128	0.796755293398629\\
130	0.883600190627517\\
132	0.972338245918647\\
134	1.06132513591478\\
136	1.14877388559765\\
138	1.23405162805395\\
140	1.31381049037493\\
142	1.38965666068964\\
144	1.45752548681793\\
146	1.52024925852909\\
148	1.5756197609514\\
150	1.62251946457896\\
152	1.66364012294513\\
154	1.69606196677442\\
156	1.72189367794865\\
158	1.74224334368078\\
160	1.75718608527967\\
162	1.76882811174493\\
164	1.77662216394582\\
166	1.78298693470112\\
168	1.78764432625468\\
170	1.79010463807316\\
};

\addplot [color=mycolor1, line width=1.2pt, mark=*, mark options={solid, fill=mycolor1, mycolor1}, mark size=0.7pt, forget plot]
  table[row sep=crcr]{%
90	0.0377357000730222\\
92	0.0499904028872956\\
94	0.0660850036873169\\
96	0.0857331318760863\\
98	0.110364356082031\\
100	0.14012780158007\\
102	0.176750984109636\\
104	0.220842119656347\\
106	0.274447770748191\\
108	0.338290387407312\\
110	0.412829165702559\\
112	0.499730469266276\\
114	0.599239221237582\\
116	0.712147339132502\\
118	0.839521564265389\\
120	0.979876112202369\\
122	1.13342800370104\\
124	1.29815897779453\\
126	1.47410997098991\\
128	1.65758404029964\\
130	1.84702716330691\\
132	2.03943039197272\\
134	2.23197030137576\\
136	2.42117506889698\\
138	2.60243666121225\\
140	2.77264640310996\\
142	2.9307537752157\\
144	3.07248610615308\\
146	3.19858975548068\\
148	3.30546554058169\\
150	3.39537586365356\\
152	3.46714872193756\\
154	3.52493540060229\\
156	3.56823323430137\\
158	3.60189863216622\\
160	3.62572276078382\\
162	3.64439026106203\\
164	3.6566214385058\\
166	3.66507925179743\\
168	3.67114237328095\\
170	3.67498325080392\\
};

\addplot [color=mycolor1, dashed, line width=1.2pt,mark=*, mark options={solid, fill=mycolor1, mycolor1}, mark size=0.7pt,forget plot]
  table[row sep=crcr]{%
90	0.0238458577790261\\
92	0.0313805015524117\\
94	0.0406754951192751\\
96	0.0524672404220771\\
98	0.0676691322200532\\
100	0.0859803910984736\\
102	0.109340070617319\\
104	0.137424114403775\\
106	0.170445031360753\\
108	0.20921682907091\\
110	0.254045673646134\\
112	0.305132330371545\\
114	0.36303687277141\\
116	0.427742181112181\\
118	0.500101200086482\\
120	0.578769220720649\\
122	0.66412562867278\\
124	0.755320384767532\\
126	0.851583255832453\\
128	0.951018452284452\\
130	1.05324691910647\\
132	1.15538098949042\\
134	1.25609477112267\\
136	1.35379021469058\\
138	1.44826360913169\\
140	1.53706207643671\\
142	1.62046932457002\\
144	1.69564192267628\\
146	1.76350030480654\\
148	1.82380806363545\\
150	1.87759479392358\\
152	1.92365380198306\\
154	1.96428679484687\\
156	1.99774454005314\\
158	2.02477173902489\\
160	2.04632411052123\\
162	2.06318509083751\\
164	2.07628510734422\\
166	2.08571436281951\\
168	2.09357992756049\\
170	2.09819746276706\\
172	2.10188889226634\\
174	2.10427279350739\\
176	2.10568588474349\\
178	2.10742857677009\\
180	2.10784232124822\\
182	2.10930460883767\\
184	2.10877022203851\\
186	2.10873365814654\\
188	2.10942435001401\\
190	2.10954882271341\\
};

\addplot [color=mycolor2, line width=1.2pt]
  table[row sep=crcr]{%
90	0.039349849650696\\
92	0.0514147063961492\\
94	0.0660768000474997\\
96	0.0840707654763682\\
98	0.10489648090901\\
100	0.129344129686317\\
102	0.15802065498441\\
104	0.19168089523078\\
106	0.229830402598524\\
108	0.27288134256283\\
110	0.321498228189827\\
112	0.376748599662868\\
114	0.440467791136817\\
116	0.511547966675297\\
118	0.590438657239656\\
120	0.676640568911553\\
122	0.770281544224609\\
124	0.870115995479263\\
126	0.973157638844828\\
128	1.07738978214115\\
130	1.1817193545769\\
132	1.28560281512101\\
134	1.38810973926341\\
136	1.48702492789218\\
138	1.57954423671646\\
140	1.66391173737568\\
142	1.73868847530656\\
144	1.80240161918728\\
146	1.85566362258762\\
148	1.89812338653189\\
150	1.93053059550318\\
152	1.9539554760193\\
154	1.97038303121882\\
156	1.98126112337704\\
158	1.98803927056059\\
160	1.99244053508186\\
162	1.9951941621808\\
164	1.99715168224291\\
166	1.99850974241902\\
168	1.99925908018313\\
170	1.99955244460692\\
};
\addlegendentry{QA-GPI-SEM}

\addplot [color=mycolor3, line width=1.2pt,mark=*, mark options={solid, fill=mycolor3, mycolor3}, mark size=0.7pt, forget plot]
  table[row sep=crcr]{%
90	0.0367989138139898\\
92	0.0487358757960205\\
94	0.0637405698008436\\
96	0.0824956148863495\\
98	0.105686918233012\\
100	0.133611901637923\\
102	0.168172700639995\\
104	0.210553879743656\\
106	0.260249852959062\\
108	0.318658256306944\\
110	0.388562466455473\\
112	0.470588165325421\\
114	0.567504525086435\\
116	0.676881240292472\\
118	0.799944901562595\\
120	0.939337906955031\\
122	1.09133197136369\\
124	1.25605914473644\\
126	1.43087472259868\\
128	1.61258189244287\\
130	1.80155556231248\\
132	1.9916507065424\\
134	2.18468425388448\\
136	2.37440019653349\\
138	2.55704902978802\\
140	2.73011245930736\\
142	2.8897502870156\\
144	3.03501691298937\\
146	3.16283206066347\\
148	3.2728878714456\\
150	3.36566505644041\\
152	3.44155651803106\\
154	3.50077556375293\\
156	3.54644601058998\\
158	3.58154523858128\\
160	3.60661944521\\
162	3.62447096177226\\
164	3.63613284868039\\
166	3.64538546854621\\
168	3.6516463479319\\
170	3.65572897953311\\
};

\addplot [color=mycolor3, dashed, line width=1.2pt,forget plot]
  table[row sep=crcr]{%
90	0.0235785329323547\\
92	0.0304066370927358\\
94	0.038737795830247\\
96	0.0483507932829558\\
98	0.059805777027224\\
100	0.072918861090759\\
102	0.0882725145329721\\
104	0.105852096517115\\
106	0.126641622993371\\
108	0.150426856292181\\
110	0.178744903069348\\
112	0.212473015800004\\
114	0.251872091218989\\
116	0.295736967079757\\
118	0.345803518244234\\
120	0.40097268035608\\
122	0.461518414618011\\
124	0.528356073411542\\
126	0.602226442054312\\
128	0.680706465870259\\
130	0.762957382257031\\
132	0.8483966637902\\
134	0.934583787277327\\
136	1.02052338410261\\
138	1.106862361829\\
140	1.19006213604258\\
142	1.26947338852848\\
144	1.3424772385677\\
146	1.40918880410072\\
148	1.46993013607409\\
150	1.52390465844748\\
152	1.57164584643297\\
154	1.6116987344045\\
156	1.64510924000588\\
158	1.67215624202619\\
160	1.6927850536686\\
162	1.70951404306122\\
164	1.72109053391941\\
166	1.73024278525192\\
168	1.73619084589282\\
170	1.74146840952795\\
};

\addplot [color=mycolor2, dashed, line width=1.2pt,mark=*, mark options={solid, fill=mycolor2, mycolor2}, mark size=0.7pt, forget plot]
  table[row sep=crcr]{%
90	0.0238506611967212\\
92	0.0311861068241859\\
94	0.0402966996881365\\
96	0.0516478941029033\\
98	0.0654825006188937\\
100	0.082405377591019\\
102	0.10231638918086\\
104	0.126255310612575\\
106	0.1548472773607\\
108	0.189662033614269\\
110	0.230764763985799\\
112	0.280798857594751\\
114	0.338714635038741\\
116	0.404966858078128\\
118	0.479293195672699\\
120	0.560064237291638\\
122	0.647309753415911\\
124	0.74005978302826\\
126	0.838386609579859\\
128	0.940323963364073\\
130	1.04415813504933\\
132	1.14901521215815\\
134	1.25145248956517\\
136	1.35163109292042\\
138	1.44657045325961\\
140	1.53530564451199\\
142	1.61846371240308\\
144	1.69264423582355\\
146	1.75903608781583\\
148	1.81806733261125\\
150	1.86860110103321\\
152	1.9129476023169\\
154	1.95028768790441\\
156	1.98046089312384\\
158	2.00570990943606\\
160	2.02529880582561\\
162	2.0399316263185\\
164	2.05233366341628\\
166	2.06104730817151\\
168	2.06775780762693\\
170	2.07238006312066\\
};
\end{axis}
\end{tikzpicture}%
   \centering

\caption{Average rates of users in the MIMO downlink with QA-RZF and Q-GPI-SEM precoding, $N=64$ and $Q=8$. $M=8$ for solid curves, $M=32$ for dashed curves. Curves with no markers are for  QPSK inputs, curves with circle markers are for 16 QAM.}
\label{fig:Q_8_vs_sota}
  
\end{figure}
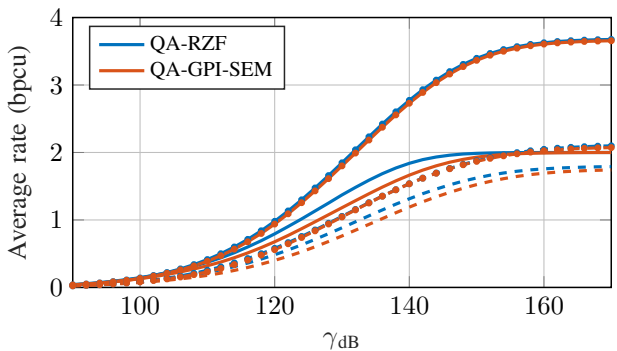

\section{Conclusion and Outlook}\label{sec:concl}
In this paper, we presented two algorithms (based on the branch and bound method and alternating optimization) for sum rate maximization  in the \ac{CE} \ac{MIMO} with  \ac{RZF} precoding. Although  alternating optimization obtains a local maximum for the approximate sum rate, it effectively achieves the same sum rate as the branch-and-bound method, which obtains the global maximum for the approximate sum rate. This makes us select the alternating optimization as the main proposed algorithm, since it has a lower computational complexity. To compare our algorithm to the state-of-the art, we extended the Q-GPI-SEM algorithm to the CE MIMO with higher resolution. The advantages of the proposed method over Q-GPI-SEM is its compatibility with any input signal distribution and significantly lower computational complexity.  In future studies, the proposed method can provide  a refined initial point for symbol-wise precoding techniques to account for the power allocation. Similarly, it can be used to improve the power allocation for hybrid precoding in CE MIMO downlink significantly.
\vspace{-2mm}
\begin{appendices} 
\section{Upper Bound to $R(\mathcal{S})$}\label{sec:upp_bound}
Let us  rewrite \eqref{eq:sum_rate_box}  as 
\begin{equation}\label{eq:ineq1}
R(\mathcal{S})=\underset{\cvec{w}\geq\cvec{0}}{\max} \sum_{k=1}^{K_{\mathcal{S}}} I(\lambda_k(u_\mathcal{S},K_\mathcal{S})w_k) \;\text{s.t.}\; \sum_{k=1}^{K_{\mathcal{S}}}w_k=K_{\mathcal{S}},
\end{equation}
where we denote the optimal $(u,K)$ for \eqref{eq:sum_rate_box}  as $(u_\mathcal{S},K_{\mathcal{S}})$ and $\cvec{w}^\star  (u_\mathcal{S},K_{\mathcal{S}})$ is the corresponding optimal power allocation obtained by \eqref{eq:solution_1}.
We define an upper bound $U(\mathcal{S})$ by using the following intermediate bounds 
\begin{align}\label{eq:ineq2}
\hat{R}(\mathcal{S})= \underset{\substack{\cvec{w}\geq \cvec{0}}}{\max} \sum_{k=1}^{K_\mathbb{S}} I(\Lambda_k w_k) \; \text{s.t.} \; \sum_{k=1}^{K_\mathcal{S}}w_k=K_{\mathcal{S}},\\
\label{eq:ineq3}
\hat{U}(\mathcal{S})= \underset{\substack{\cvec{w}\geq \cvec{0}}}{\max} \sum_{k=1}^{K_\mathbb{S}} I(\Lambda_k w_k) \; \text{s.t.} \; \sum_{k=1}^{K_\mathcal{S}}w_k=K_{\text{U}},\\
\label{eq:upper_bound_box_app}
U(\mathcal{S})=\underset{\substack{\cvec{w}\geq \cvec{0}}}{\max} \sum_{k=1}^{K_\text{U}} I(\Lambda_k w_k) \; \text{s.t.} \; \sum_{k=1}^{K_\text{U}}w_k=K_\text{U},
\end{align}
where we denote  $\Lambda_k(u_\text{L},u_\text{U},K_\text{L})$ as $\Lambda_k$ and
it holds that $U(\mathcal{S})\geq \hat{U}(\mathcal{S})\geq \hat{R}(\mathcal{S})\geq R(\mathcal{S})$. Note that we denote Inequality $\hat{R}(\mathcal{S})\geq R(\mathcal{S})$ is concluded by observing that \eqref{eq:ineq2} is a power allocation problem with same transmit power constraint as \eqref{eq:ineq1}, but with stronger channels due to \eqref{eq:strong_channel}. Inequality $\hat{U}(\mathcal{S})\geq \hat{R}(\mathcal{S})$ holds, since \eqref{eq:ineq3} is a power allocation problem for the same channels as \eqref{eq:ineq2} but with a higher power constraint. Finally, one can take the optimal power allocation vector of \eqref{eq:ineq3} and pad $K_\text{U}-K_\mathcal{S}$ zeros to make it a feasible power allocation vector for \eqref{eq:upper_bound_box_app}. The rate achieved with this feasible vector in  \eqref{eq:upper_bound_box_app} is equal to the optimal rate in \eqref{eq:ineq3} so that we  conclude  $U(\mathcal{S})\geq \hat{U}(\mathcal{S})$.

\section{Entries of $\cvec{w}_{\text{U}(\mathcal{S})}$ and the Global Maximum  }\label{sec:app}
\begin{theorem}\label{theorem:U_S} For a given subset $\mathcal{S}$, if  the optimal power allocation vector for \eqref{eq:upper_bound_box_app}, denoted as  $\cvec{w}_{\text{U}(\mathcal{S})}$, has less nonzero entries than $K_\text{L}$, then $\mathcal{S}$ does not contain the solution of \eqref{eq:prob_again}. We arrive to \Cref{theorem:U_S} by combining Lemmas 3 and 4.
\end{theorem}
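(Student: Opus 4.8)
The plan is to show that, under the hypothesis on $\cvec{w}_{\text{U}(\mathcal{S})}$, \emph{no} point $(u,K)\in\mathcal{S}$ can be a maximizer of \eqref{eq:prob_again}, so that the maximizer necessarily lies outside $\mathcal{S}$. Two ingredients are needed, which I take to be the Lemmas 3 and 4 referred to in the statement.

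The first ingredient is a non-degeneracy lemma in the spirit of \Cref{lemma:obs_2}: if the mercury/waterfilling allocation $\cvec{w}^\star(u,K)$ of \eqref{eq:solution_1}--\eqref{eq:solution_2} for some configuration $(u,K)$ has a zero entry, then deleting the zero entries, resetting $K$ to the size of the support, and rescaling the surviving powers yields a strictly larger objective; since $\mathcal{S}_0$ in \eqref{eq:prob_again} contains every $K\in\{1,\dots,M\}$, this reduced configuration is feasible, so the original $(u,K)$ is not optimal. The second ingredient is a support-monotonicity lemma: for every $(u,K)\in\mathcal{S}$, the set of users served with positive power by the optimizer $\cvec{w}_{\text{U}(\mathcal{S})}$ of \eqref{eq:upper_bound_box_app} contains the set served with positive power by $\cvec{w}^\star(u,K)$, so that in particular $|\op{supp}(\cvec{w}_{\text{U}(\mathcal{S})})|\ge|\op{supp}(\cvec{w}^\star(u,K))|$.

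Granting these, the theorem is immediate: suppose $\cvec{w}_{\text{U}(\mathcal{S})}$ has fewer than $K_\text{L}$ nonzero entries and pick any $(u,K)\in\mathcal{S}$, so that $K\ge K_\text{L}$. By support monotonicity, $\cvec{w}^\star(u,K)$ has at most $|\op{supp}(\cvec{w}_{\text{U}(\mathcal{S})})|<K_\text{L}\le K$ nonzero entries, hence at least one zero entry, and then by the first ingredient $(u,K)$ is not a maximizer of \eqref{eq:prob_again}. As $(u,K)$ was arbitrary, the maximizer lies outside $\mathcal{S}$.

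The first ingredient is short. The real work, and the expected obstacle, is the support-monotonicity lemma. Relative to the mercury/waterfilling problem defining $\cvec{w}^\star(u,K)$, the one defining $\cvec{w}_{\text{U}(\mathcal{S})}$ has (i) the dominating channel gains $\Lambda_k(u_\text{L},u_\text{U},K_\text{L})\ge\lambda_k(u,K)$ of \eqref{eq:strong_channel}, (ii) a larger number of slots $K_\text{U}\ge K$, the extra ones carrying the \emph{smallest} enhanced gains because $\Lambda_1\ge\cdots\ge\Lambda_{K_\text{U}}$, and (iii) the matching power budget $K_\text{U}\ge K$. I would track the mercury/waterfilling water level $\mu$ through these three changes: appending slots with the smallest gains cannot shrink the active prefix, and raising the power budget lowers $\mu$ (monotonicity of mercury/waterfilling in the total power, cf. \cite{LoTuVe05}) and hence can only enlarge it. The delicate step is lifting the gains to the enhanced values $\Lambda_k$ while keeping every previously active user active: strengthening the strongest users disproportionately raises the water level and could in principle expel a marginal user, so the argument must exploit the algebraic structure behind \eqref{eq:lower_bound_m_2} and \eqref{eq:channel_enhance} — that the enhancement is the \emph{coherent} substitution $u\mapsto(u_\text{L},u_\text{U})$, $K\mapsto K_\text{L}$, that $1/\lambda_k(u,K)$ and $1/\Lambda_k$ are affine in the fixed coefficients $c_k$ of \eqref{eq:c_k}, and that the ordering of the $c_k$ (hence of the gains) is common to the true and the enhanced system. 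With that structure the comparison collapses to nested active prefixes; without it the monotonicity is false. The remainder is bookkeeping with the optimality conditions \eqref{eq:solution_1}--\eqref{eq:solution_2}.
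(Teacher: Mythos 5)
Your decomposition is exactly the paper's: your first ingredient is its \Cref{lemma:obs_2} (remove the zeros, reset $K$ to the support size $J<K_\text{L}$, rescale, and observe via \eqref{eq:SNR_Ks}--\eqref{eq:SNR_J} that every SNR strictly increases, so the dominating configuration lies in $\mathcal{S}_0\setminus\mathcal{S}$), and your support-monotonicity claim is its preceding lemma, which bounds the number of nonzero entries of $\cvec{w}^\star(u_\mathcal{S},K_\mathcal{S})$ by that of $\cvec{w}_{\text{U}}(\mathcal{S})$ by chasing the mercury/waterfilling water levels through the chain \eqref{eq:ineq1}--\eqref{eq:upper_bound_box_app}, so the combination you describe is the one the paper performs. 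The only difference is where the analytical weight is placed: the step you flag as delicate --- that lifting $\lambda_k(u,K)$ to the enhanced gains $\Lambda_k$ of \eqref{eq:channel_enhance} cannot shrink the active set --- is exactly the step the paper asserts as ``clear from the argumentation in Appendix A'' ($J\le\hat{J}\le\hat{J}_\text{U}$), reserving its explicit water-level contradiction for the slot/budget comparison $\hat{J}_\text{U}\le J_\text{U}$ that you treat as routine; your observation that the rescue comes from the affine dependence of $1/\lambda_k$ and $1/\Lambda_k$ on the common, commonly ordered $c_k$ of \eqref{eq:c_k} is correct (one checks it directly in the Gaussian/waterfilling case, where the differences $1/\lambda_n-1/\lambda_k=(c_n-c_k)\beta/(1/\beta-u^2)$ only shrink under the enhancement) and is a point the paper leaves implicit.
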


\begin{lemma} Within a given subset $\mathcal{S}$, number of nonzero elements in the optimal power allocation vector $\cvec{w}^\star  (u_\mathcal{S},K_{\mathcal{S}})$ is upper  bounded by the number of nonzero elements in $\cvec{w}_{\text{U}}(\mathcal{S})$.
\end{lemma}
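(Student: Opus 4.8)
The plan is to read the claim through the prefix structure of mercury/waterfilling. From \eqref{eq:solution_1}, $w_k^\star>0$ exactly when $\lambda_k>\mu$, so—because both $\lambda_1(u_\mathcal{S},K_\mathcal{S})\ge\cdots\ge\lambda_{K_\mathcal{S}}(u_\mathcal{S},K_\mathcal{S})$ and $\Lambda_1\ge\cdots\ge\Lambda_{K_\text{U}}$ (the latter since the $c_k$ in \eqref{eq:c_k} are nondecreasing in $k$, the $\sigma_k$ being nonincreasing)—the supports of $\cvec{w}^\star(u_\mathcal{S},K_\mathcal{S})$ and $\cvec{w}_\text{U}(\mathcal{S})$ are $\{1,\dots,n_0\}$ and $\{1,\dots,n_\text{U}\}$, and the lemma becomes $n_\text{U}\ge n_0$. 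I would first make the prefix property quantitative: for sorted gains $\ell_1\ge\cdots\ge\ell_K>0$ and power $P$, user $m$ is active iff $\Psi_m(\ell):=\sum_{k=1}^{m-1}\tfrac1{\ell_k}\,\text{MMSE}^{-1}(\ell_m/\ell_k)<P$, and $\Psi_m(\ell)$ is nondecreasing in $m$; this follows by letting $\mu\uparrow\ell_m$ in \eqref{eq:solution_2} restricted to the $m$ strongest users, using $\text{MMSE}^{-1}(1)=0$ and $\text{MMSE}^{-1}(0^{+})=\infty$. Since user $n_0$ is active in the $R(\mathcal{S})$ problem (power $K_\mathcal{S}$), this gives $\Psi_{n_0}(\lambda(u_\mathcal{S},K_\mathcal{S}))<K_\mathcal{S}$; since the $U(\mathcal{S})$ problem has power $K_\text{U}$, it is enough to show $\Psi_{n_0}(\Lambda)<K_\text{U}$.

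The core estimate I would establish is $\Psi_{n_0}(\Lambda)\le\Psi_{n_0}(\lambda(u_\mathcal{S},K_\mathcal{S}))$; together with $K_\mathcal{S}\le K_\text{U}$ (immediate from $(u_\mathcal{S},K_\mathcal{S})\in\mathcal{S}$) this closes the argument. The key point is that both gain families share the shape $\ell_k=\tfrac{A}{B+c_k}$. Writing $\lambda_k(u_\mathcal{S},K_\mathcal{S})=\tfrac{A'}{B'+c_k}$ and $\Lambda_k=\tfrac{A}{B+c_k}$, the enhancement rule \eqref{eq:channel_enhance} with $u_\text{L}\le u_\mathcal{S}\le u_\text{U}$ and $K_\text{L}\le K_\mathcal{S}$ yields $A=\tfrac1{\beta(K_\text{L})}-u_\text{L}^2\ge\tfrac1{\beta(K_\mathcal{S})}-u_\mathcal{S}^2=A'$ and $B=\tau(K_\text{L})(1-u_\text{U})^2\le\tau(K_\mathcal{S})(1-u_\mathcal{S})^2=B'$. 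Substituting $\ell_k=A/(B+c_k)$ into $\Psi_m$ collapses it to $\Psi_m(\ell)=\tfrac1A S_m(B)$, where $S_m(\beta_0):=\sum_{k=1}^{m-1}(\beta_0+c_k)\,\text{MMSE}^{-1}\!\big(\tfrac{\beta_0+c_k}{\beta_0+c_m}\big)\ge0$. Hence $\Psi_m(\Lambda)=\tfrac1A S_m(B)\le\tfrac1{A'}S_m(B)$ by $A\ge A'$, and it remains to show $S_m$ is nondecreasing on $[B,B']$, which then gives $\tfrac1{A'}S_m(B)\le\tfrac1{A'}S_m(B')=\Psi_m(\lambda(u_\mathcal{S},K_\mathcal{S}))$.

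The main obstacle is precisely the monotonicity of $S_m$ in $\beta_0$. Differentiating a single summand $(\beta_0+c_k)\,\text{MMSE}^{-1}(x_k)$ with $x_k=\tfrac{\beta_0+c_k}{\beta_0+c_m}\in(0,1]$ and simplifying with the I--MMSE relation \eqref{eq:mmse}, the condition $\tfrac{d}{d\beta_0}S_m\ge0$ reduces to the scalar inequality $-\,\text{snr}\cdot\text{MMSE}'(\text{snr})\ge\text{MMSE}(\text{snr})\,(1-\text{MMSE}(\text{snr}))$ for all $\text{snr}\ge0$, i.e.\ $\tfrac{d}{d\,\text{snr}}\big[\text{snr}\cdot\text{MMSE}(\text{snr})\big]\le\text{MMSE}(\text{snr})^2$. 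This holds with equality for Gaussian inputs ($\text{MMSE}(\text{snr})=1/(1+\text{snr})$) and I expect it to be strict for every finite constellation; I would prove it from the conditional-variance identities $\text{MMSE}(\text{snr})=\op{E}[V]$ and $-\text{MMSE}'(\text{snr})=\op{E}[V^2]$ with $V=\op{Var}(s\mid y)$, treating separately the small-$\text{snr}$ range (where $\text{snr}\cdot\text{MMSE}(\text{snr})$ still grows, handled through the second-order behaviour of $\text{MMSE}$) and the large-$\text{snr}$ range (where $\text{snr}\cdot\text{MMSE}(\text{snr})$ decreases and the bound is immediate). Granting this, $\Psi_{n_0}(\Lambda)\le\Psi_{n_0}(\lambda(u_\mathcal{S},K_\mathcal{S}))<K_\mathcal{S}\le K_\text{U}$, so user $n_0$ is active in $U(\mathcal{S})$, i.e.\ $n_\text{U}\ge n_0$, which is the claim.
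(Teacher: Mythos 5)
Your route is genuinely different from the paper's, and in one respect more careful. The paper proves the lemma by chaining the four problems \eqref{eq:ineq1}--\eqref{eq:upper_bound_box_app}, asserting $J\le\hat J\le\hat J_\text{U}$ ``from the argumentation in Appendix~\ref{sec:upp_bound}'' and closing the last link $\hat J_\text{U}\le J_\text{U}$ with a water-level contradiction. The first link --- that replacing $\lambda_k$ by the componentwise larger $\Lambda_k$ at fixed total power cannot shrink the active set --- is exactly the nontrivial point: a non-uniform enhancement that boosts strong users more than weak ones (which is what \eqref{eq:channel_enhance} does, since $\Lambda_k/\lambda_k$ is decreasing in $c_k$) can in general \emph{deactivate} weak users under mercury/waterfilling. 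You attack precisely this point by exploiting the shared parametric form $A/(B+c_k)$, and your algebra is correct: the activity threshold $\Psi_m$ collapses to $\tfrac1A S_m(B)$ with $A\ge A'$, $B\le B'$, the prefix-support and monotonicity-in-$m$ claims are sound, and everything reduces cleanly to the scalar inequality $-\,\text{snr}\cdot\text{MMSE}'(\text{snr})\ge\text{MMSE}(\text{snr})\,(1-\text{MMSE}(\text{snr}))$.

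That scalar inequality is, however, a genuine gap rather than a routine verification, and your whole argument stands or falls with it. It holds with equality for Gaussian inputs, but the standard tools point the wrong way: Jensen gives $-\text{MMSE}'=\op{E}[V^2]\ge(\op{E}[V])^2=\text{MMSE}^2$ with $V=\op{Var}(s\mid y)$, so it would suffice that $\text{snr}\cdot\text{MMSE}\ge 1-\text{MMSE}$, i.e.\ $\text{MMSE}\ge 1/(1+\text{snr})$ --- whereas non-Gaussian unit-variance inputs satisfy $\text{MMSE}\le 1/(1+\text{snr})$ strictly. So you need a quantitative lower bound on $\op{E}[V^2]-(\op{E}[V])^2$ that exactly compensates for this deficit, uniformly in $\text{snr}$ and for every admissible input distribution; a low-$\text{snr}$ Taylor expansion for BPSK shows the margin is only third order in $\text{snr}$, so there is essentially no slack. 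Your plan of ``treating separately the small- and large-$\text{snr}$ ranges'' is a strategy, not a proof, and the intermediate-$\text{snr}$ regime is where the claim is delicate. Until that inequality is established for the class of inputs the paper allows, the proof is incomplete --- although, if you do establish it, your argument would in fact repair the step that the paper's own proof only asserts.
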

\begin{proof}
Let us denote number of nonzero entries in optimal $\cvec{w}$ vectors  as $J,\hat{J},\hat{J}_\text{U},J_\text{U}$ for  \eqref{eq:ineq1}, \eqref{eq:ineq2}, \eqref{eq:ineq3} and \eqref{eq:upper_bound_box_app}, respectively. From the argumentation in Appendix~\ref{sec:upp_bound}, it is clear that $J\leq\hat{J}\leq\hat{J}_\text{U}$. In order to compare $\hat{J}_\text{U}$ to $J_\text{U}$, let us consider the optimality conditions for  \eqref{eq:ineq3} and  \eqref{eq:upper_bound_box_app} respectively as follows
\begin{equation}\label{eq:opt_cond}
\sum_{\substack{k=1}}^{\hat{J}_{\text{U}}} \frac{1}{\Lambda_k(u_\text{L},u_\text{U},K_\text{L})} \text{MMSE}^{-1}(\frac{\hat{\mu}_\text{U}}{\Lambda_k(u_\text{L},u_\text{U},K_\text{L})})=K_\text{U},
\end{equation}
\begin{equation}\label{eq:opt_cond_2}
\sum_{\substack{k=1}}^{J_{\text{U}}} \frac{1}{\Lambda_k(u_\text{L},u_\text{U},K_\text{L})} \text{MMSE}^{-1}(\frac{{\mu}_\text{U}}{\Lambda_k(u_\text{L},u_\text{U},K_\text{L})})=K_\text{U}.
\end{equation}
Terms $\hat{\mu}_\text{U}$ and $\mu_\text{U}$ are the optimal waterlevels for \eqref{eq:ineq3} and  \eqref{eq:upper_bound_box_app}. Before proceeding recall that $\Lambda_1(u_\text{L},u_\text{U},K_\text{L})\geq\Lambda_2(u_\text{L},u_\text{U},K_\text{L})\ldots\geq\Lambda_{K_\text{U}}(u_\text{L},u_\text{U},K_\text{L})$. Now let us assume that $J_\text{U}<\hat{J}_{\text{U}}$. In this case, \eqref{eq:opt_cond} and \eqref{eq:opt_cond_2} can only hold if $\mu_\text{U}<\hat{\mu}_\text{U}$.  On the other hand, if the optimal power allocation vector in \eqref{eq:upper_bound_box_app} has  ${J}_{\text{U}}$ nonzero entries, then it holds that $\Lambda_{\hat{J}_\text{U}}(u_\text{L},u_\text{U},K_\text{L})\leq\mu_\text{U}<\Lambda_{{J}_\text{U}}(u_\text{L},u_\text{U},K_\text{L})$. Similarly, if the optimal power allocation vector in \eqref{eq:ineq2} has $\hat{J}_\text{U}$ nonzero entries, then $\hat{\mu}_{\text{U}}<\Lambda_{\hat{J}_\text{U}}(u_\text{L},u_\text{U},K_\text{L})$. As a result, it must also hold that $\hat{\mu}_\text{U}<\mu_\text{U}$, which leads to a contradiction such that $J_\text{U}<\hat{J}_{\text{U}}$ cannot hold. We conclude that $J\leq\hat{J}\leq\hat{J}_\text{U}\leq J_\text{U}$.
\end{proof}

\begin{lemma}\label{lemma:obs_2}For a subset $\mathcal{S}=\{(u,K)| \,u\in(u_\text{L},u_\text{U}),\,K\in\{K_{\text{L}},\ldots K_\text{U}\}\}$,  if sum rate maximizing power allocation vector  $\cvec{w}^{\star}(u_\mathcal{S},K_\mathcal{S})$ has less nonzero elements than $K_\text{L}$, then $\mathcal{S}$ does not contain the solution of \eqref{eq:prob_again}.
\end{lemma}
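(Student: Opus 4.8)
The plan is to produce, out of the degenerate optimal point $(u_\mathcal{S},K_\mathcal{S})$ of $\mathcal{S}$, a strictly better competitor $(u_\mathcal{S},J)$ that lies in $\mathcal{S}_0$ but outside $\mathcal{S}$, which immediately rules out $\mathcal{S}$ as a container of the maximizer of \eqref{eq:prob_again}. First I would fix notation: let $(u_\mathcal{S},K_\mathcal{S})$ be the maximizing pair of \eqref{eq:sum_rate_box}, let $\cvec{w}^\star=\cvec{w}^\star(u_\mathcal{S},K_\mathcal{S})$ be its mercury/waterfilling allocation from \eqref{eq:solution_1}--\eqref{eq:solution_2}, and let $J$ be the number of nonzero entries of $\cvec{w}^\star$, with the hypothesis $J<K_\text{L}\le K_\mathcal{S}$. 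If $J=0$ then $R(\mathcal{S})=0$, which is strictly below the value of \eqref{eq:prob_again} at any single-user point, so $\mathcal{S}$ cannot be optimal and we are done; hence assume $J\ge 1$. Because $\lambda_1(u_\mathcal{S},K_\mathcal{S})\ge\cdots\ge\lambda_{K_\mathcal{S}}(u_\mathcal{S},K_\mathcal{S})$ and mercury/waterfilling assigns power only to users with $\lambda_k>\mu$, the active users are exactly $1,\dots,J$; in particular $w_1^\star,\dots,w_J^\star>0$, $\sum_{k=1}^{J}w_k^\star=K_\mathcal{S}$, and $\lambda_k(u_\mathcal{S},K_\mathcal{S})>\mu\ge0$ for $k\le J$, the last of which forces $\tfrac{N}{K_\mathcal{S}}-u_\mathcal{S}^2>0$.

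Next I would pass to the point $(u_\mathcal{S},J)$, which belongs to $\mathcal{S}_0$ (as $1\le J\le M$ and $u_\mathcal{S}\in[0,1]$) but not to $\mathcal{S}$ (as $J<K_\text{L}$), and feed it the feasible allocation $\tilde w_k=\tfrac{J}{K_\mathcal{S}}w_k^\star$ for $k=1,\dots,J$, which satisfies $\sum_{k=1}^{J}\tilde w_k=J$. The crux is the inequality $J\,\lambda_k(u_\mathcal{S},J)>K_\mathcal{S}\,\lambda_k(u_\mathcal{S},K_\mathcal{S})$ for each $k\le J$, which I would obtain by showing that $K\mapsto K\lambda_k(u_\mathcal{S},K)=\dfrac{N-Ku_\mathcal{S}^2}{\tfrac{K-1}{K}(1-u_\mathcal{S})^2+c_k}$ is strictly decreasing in $K$ on the range where the numerator stays positive: the numerator is nonincreasing in $K$ and the positive denominator is nondecreasing in $K$ (because $\tfrac{K-1}{K}$ increases and $c_k>0$), and at least one of these is strict for every $u_\mathcal{S}\in[0,1]$ — strictly decreasing numerator when $u_\mathcal{S}>0$, strictly increasing denominator when $u_\mathcal{S}<1$. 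Multiplying $J\lambda_k(u_\mathcal{S},J)>K_\mathcal{S}\lambda_k(u_\mathcal{S},K_\mathcal{S})$ by $w_k^\star/K_\mathcal{S}>0$ then gives $\lambda_k(u_\mathcal{S},J)\,\tilde w_k>\lambda_k(u_\mathcal{S},K_\mathcal{S})\,w_k^\star$ for $k=1,\dots,J$.

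Finally, since $I(\cdot)$ is strictly increasing — its derivative is $\text{MMSE}(\cdot)/\ln 2>0$ by \eqref{eq:mmse} — summing over $k=1,\dots,J$ yields $\sum_{k=1}^{J}I(\lambda_k(u_\mathcal{S},J)\tilde w_k)>\sum_{k=1}^{J}I(\lambda_k(u_\mathcal{S},K_\mathcal{S})w_k^\star)=R(\mathcal{S})$. Because $\cvec{w}^\star(u_\mathcal{S},J)$ solves \eqref{eq:fixed_u} optimally it does at least as well as $\tilde{\cvec{w}}$, so the objective of \eqref{eq:prob_again} evaluated at $(u_\mathcal{S},J)$ strictly exceeds $R(\mathcal{S})=\max_{(u,K)\in\mathcal{S}}\sum_{k}I(\lambda_k(u,K)w_k^\star(u,K))$; hence the maximizer of \eqref{eq:prob_again} cannot lie in $\mathcal{S}$. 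The hard part is really the uniform strict monotonicity of $K\lambda_k(u_\mathcal{S},K)$: one has to treat the two endpoint values $u_\mathcal{S}=0$ and $u_\mathcal{S}=1$ separately from the interior to certify that strictness never degenerates, while everything else is a direct consequence of the prefix structure of the mercury/waterfilling solution and the monotonicity of $I(\cdot)$.
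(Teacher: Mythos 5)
Your proposal is correct and follows essentially the same route as the paper: both construct the competitor $(u_\mathcal{S},J)$ with the rescaled allocation $\tilde w_k=\tfrac{J}{K_\mathcal{S}}w_k^\star$ and show the resulting per-user SNRs strictly dominate those at $(u_\mathcal{S},K_\mathcal{S})$, which the paper does by directly comparing the two SNR expressions and you phrase as strict monotonicity of $K\mapsto K\lambda_k(u,K)$. Your extra care about strictness at the endpoints $u_\mathcal{S}\in\{0,1\}$ is a point the paper glosses over, but it does not change the argument.
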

\begin{proof}
Let us assume that  $\cvec{w}^{\star}(u_\mathcal{S},K_\mathcal{S})$ has $J<K_\text{L}$ nonzero elements, which means that last $K_\mathcal{S}-J$ entries of  $\cvec{w}^{\star}(u_\mathcal{S},K_\mathcal{S})$ are equal to 0. For  subset $\mathcal{S}$, the optimal \ac{SNR} values of  users  reads as 
\begin{equation}\label{eq:SNR_Ks}
\text{SNR}_{k}=
\frac{\frac{N}{K_\mathcal{S}}-u^2_{\mathcal{S}}}{\frac{K_\mathcal{S}-1}{K_\mathcal{S}}(1-u_\mathcal{S})^2+c_k}w_k^{\star}(u_\mathcal{S},K_\mathcal{S}) 
\end{equation} 
for $k=1,2\ldots J$. For $k=J+1,\ldots K_\mathcal{S}$, $\text{SNR}_{k}=0$. Note that $\sum_{k=1}^{K_\mathcal{S}}w^{\star}_k(u_\mathcal{S},K_\mathcal{S})=K_\mathcal{S}$. Let us consider an alternative system with $u=u_\mathcal{S}$ and $K=J$, which is not in the subset $\mathcal{S}$. For such system, setting $\bar{w}_k=\frac{J}{K_\mathcal{S}}w^{\star}_k(u_\mathcal{S},K_\mathcal{S})$ for $k=1,\ldots J$, is a valid power  allocation as it satisfies $\sum_{k=1}^{J}\bar{w}_k=J$.  As a result, the \acp{SNR} values in system with $u_\mathcal{S}$, $J$ and $\bar{\cvec{w}}$ reads as
\begin{equation}\label{eq:SNR_J}
\bar{\text{SNR}}_{k}= \frac{\frac{N}{K_\mathcal{S}}-\frac{J}{K_\mathcal{S}}u^2_{\mathcal{S}}}{\frac{J-1}{J}(1-u_\mathcal{S})^2+c_k}w_k^{\star}(u_\mathcal{S},K_\mathcal{S}) 
\end{equation}
for $k=1,\ldots J$. It is clear that $\bar{\text{SNR}}_{k}>\text{SNR}_{k}$, which implies that the optimal rates achieved by active users in $\mathcal{S}$ are less than the rates achieved in the  alternative system with $u=u_\mathcal{S}$, $K=J$ and $\bar{\cvec{w}}$. As the alternative system is also included in the initial set $\mathcal{S}_0$, it is clear that $\mathcal{S}$ does not contain the solution of \eqref{eq:prob_again}.
\end{proof}
\end{appendices}
\bibliographystyle{IEEEtran}
\bibliography{IEEEabrv,examplebibfile} 

\end{document}